\documentclass{article}





\usepackage[final,nonatbib]{neurips_2025}

\usepackage[utf8]{inputenc} 
\usepackage[T1]{fontenc}    
\usepackage{hyperref}       
\usepackage{url}            
\usepackage{booktabs}       
\usepackage{nicefrac}       
\usepackage{microtype}      
\usepackage[table]{xcolor}         

\usepackage{amsfonts, amssymb, amsmath, amsthm}
\usepackage{bm, bbm}
\usepackage{graphicx}
\usepackage{wrapfig}
\usepackage{mathtools}
\usepackage{multirow}
\usepackage[skip=3pt]{caption}
\usepackage{duckuments}
\usepackage{algpseudocode}
\usepackage{subcaption}
\usepackage{diagbox}

\usepackage{algorithm}
\usepackage{xr-hyper}
\usepackage{hyperref}
\externaldocument[exp-]{src/5experiment}

\newtheorem{theorem}{Theorem}[section]
\newtheorem{definition}{Definition}[section]
\newtheorem{lemma}[theorem]{Lemma}

\title{STEAD: Robust Provably Secure Linguistic Steganography with Diffusion Language Model}

%

\author{%
  \vspace{-25pt}\\
  \textbf{Yuang~Qi$^{\dag}$,\quad Na~Zhao$^{\dag}$,\quad Qiyi~Yao,\quad Benlong~Wu,}\\
  \textbf{\quad Weiming~Zhang, \quad Nenghai~Yu, \quad Kejiang~Chen\thanks{Corresponding author;\, $\dag$: equal contribution}}\vspace{3pt} \\
  University of Science and Technology of China \\
  Anhui Province Key Laboratory of Digital Security\vspace{3pt}  \\
  \texttt{\small \{qiyuang@mail., znzhaona@mail., chenkj@\}ustc.edu.cn}\vspace{3pt}  \\
  Codes:~\, \url{https://github.com/7-yaya/STEAD}
  \vspace{-15pt} \\
}

\begin{document}

\maketitle

\begin{abstract}
  Recent provably secure linguistic steganography (PSLS) methods rely on mainstream autoregressive language models (ARMs) to address historically challenging tasks, that is, to disguise covert communication as ``innocuous'' natural language communication. However, due to the characteristic of sequential generation of ARMs, the stegotext generated by ARM-based PSLS methods will produce serious error propagation once it changes, making existing methods unavailable under an active tampering attack. To address this, we propose a robust, provably secure linguistic steganography with diffusion language models (DLMs). Unlike ARMs, DLMs can generate text in a partially parallel manner, allowing us to find robust positions for steganographic embedding that can be combined with error-correcting codes. Furthermore, we introduce error correction strategies, including pseudo-random error correction and neighborhood search correction, during steganographic extraction. Theoretical proof and experimental results demonstrate that our method is secure and robust. It can resist token ambiguity in stegotext segmentation and, to some extent, withstand token-level attacks of insertion, deletion, and substitution.
  
\end{abstract}

\section{Introduction}
\label{sec:intro}


In the increasingly digital world, the relevance of linguistic steganography has grown significantly. It offers a discreet method for covert communication across various applications, including intelligence operations, secure corporate communications, and privacy preservation~\cite{lin2024novel}. Furthermore, in regions with strict censorship policies, linguistic steganography can serve as a means to bypass content restrictions, enabling individuals to share and access information discreetly. 
Traditional linguistic steganography methods typically embed secret messages directly into existing text by modification~\cite{yang2023semantic, chang2022reversible} or by training a specialized text generation model to produce steganographic text~\cite{yang2018rnn, yang2020vae}. However, these methods' security cannot be guaranteed, and they remain susceptible to detection by sophisticated steganalysis techniques~\cite{yang2019fast, niu2019hybrid, yang2020linguistic, li2022detection}. Alternatively, provably secure steganography (PSS) methods~\cite{hopper2002provably, hopper2008provably, solanki2006provably} theoretically guarantee the indistinguishability of the stego (the carrier containing hidden data) and the cover (the original unaltered carrier).

In recent years, the progress of generative artificial intelligence has underscored the potential of autoregressive models (ARMs) in the field of provably secure linguistic steganography~\cite{ziegler2019neural, zhang2021provably, kaptchuk2021meteor, ding2023discop, wang2025sparsamp}. ARMs~\cite{radford2019language,brown2020language,touvron2023llama} generate text sequentially by predicting the next token based on the preceding context. They have played a particularly significant role in advancing provably secure steganography due to the precise sampling they provide, a capability that is unattainable in natural language environments~\cite{chen2018provably}.

However, despite the high quality of text generated by ARMs, their sequential generation characteristic makes previous provably secure steganography methods highly vulnerable~\cite{perry2025robust, bai2025provably, wu2024generative, wu2025gtsd, Pan2025Rethinking}. In ARMs, the probability of each token being generated is conditional on all previously generated tokens, meaning that once a stego token is altered, it not only interferes with the message of the current token but also impacts all messages of subsequent tokens due to the change in the conditional distribution.
In a strictly regulated environment, adversaries may tamper with the text transmitted over public channels, causing significant disruption to message extraction. Additionally, the ambiguity in the segmentation of stegotext further increases the likelihood of errors during message extraction~\cite{nozaki2022addressing, qi2024provably, yan2023secure, yan2024near}.


Recently, the rapid development of discrete diffusion language models (DLMs)~\cite{austin2021structured, sahoosimple, shi2024simplified, han2023ssd, arriola2025block, dream2025} has presented an opportunity to address the vulnerabilities in provably secure linguistic steganography. DLMs have gained considerable attention since their introduction to the field of text generation, emerging as a promising alternative for sequential generation.
Unlike ARMs, which generate tokens sequentially, DLMs begin from a completely noisy state and dynamically refine the entire sequence in a parallel manner. This different generation paradigm can, to some extent, mitigate the error propagation caused by sequential generation, thereby enhancing provably secure linguistic steganography and addressing its lack of robustness. Furthermore, the growing popularity of diffusion language models also provides an ideal camouflage environment for steganography.

However, directly bringing the DLM into PSS will not solve the problem of error propagation. Specifically, the advantages brought by DLM's parallel sampling can only be realized within a single denoising step. The model distribution for the next step still depends on the previously sampled tokens. Directly deploying PSS to DLM does not bring the expected robustness, and, due to the strong binding between the distributions and the token positions, it will be more fragile than the PSS method based on ARM when faced with possible position offsets (such as additions, deletions, or token ambiguity). Therefore, correcting potential errors within each step is crucial for achieving robust steganography based on DLM.

In this paper, we propose a provably secure and robust linguistic \textbf{STE}g\textbf{A}nography method using a \textbf{D}iffusion language model, namely \textbf{STEAD} (serving as an acronym for the technique while metaphorically highlighting its ``steady'' nature). To achieve error-correctable message embedding within a single-step denoising process, we introduce a message-driven pseudo-random number sampling algorithm with a fixed embedding capacity, based on the trade-off between robustness and embedding capacity. The message is embedded only at denoising positions which satisfies the encoding conditions of repetition error-correcting coding.
We refer as a robust position embedding with repetitive error correction coding.
During extraction, in addition to decoding ECC, we also introduce a neighborhood search strategy to address position offsets caused by insertions and deletions.


\textbf{Contributions: }
(1) we propose a provably secure and robust linguistic steganography framework based on discrete diffusion language models; (2) we design robust position embedding with repetitive error correction coding and a neighborhood search strategy to enhance the robustness of steganography; and (3) theoretical proof and experimental results demonstrate the superiority of STEAD in terms of security and robustness.

\section{Background}

\subsection{Preliminaries}

\begin{definition}[Stegosystem]
A symmetric key steganographic system (stegosystem) is a triple $\mathcal{S}=(\mathcal{K}, \mathcal{E}, \mathcal{D})$, where \textup{(1)} $\mathcal{K}(\lambda)\rightarrow \mathbf{k}$ is a probabilistic key generation algorithm that takes the security parameter $\lambda$ and outputs a symmetric key $\mathbf{k}$; \textup{(2)} a probabilistic encoding algorithm $\mathcal{E}(\mathbf{k},\mathbf{h},\mathbf{m})\rightarrow \mathbf{st}$ takes a key $\mathbf{k}$, a channel history $\mathbf{h}$ and a message $\mathbf{m}$ as inputs, and outputs a stegotext $\mathbf{st}$; and \textup{(3)} a deterministic decoding algorithm $\mathcal{D}(\mathbf{k},\mathbf{h},\mathbf{st})\rightarrow \mathbf{m}$ takes a key $\mathbf{k}$, a channel history $\mathbf{h}$, and a stegotext $\mathbf{st}$, and outputs a decoded message $\mathbf{m}$.
\end{definition}

\begin{definition}[Computational Security]
A stegosystem $\mathcal{S}=(\mathcal{K}, \mathcal{E}, \mathcal{D})$ is computationally secure under chosing hidden-text attack if, for any probabilistic polynomial-time adversary $\mathcal{A}$, the advantage of $\mathcal{A}$ in distinguishing between the output stegotext of the steganographic encoding algorithm $\mathcal{E}$ and the output of random sampler $\mathcal{O}$ is negligible. Formally, if
\begin{equation}
    \left|\Pr\left[\mathcal{A}^{\mathcal{E}(\mathbf{k},\mathbf{h},\mathbf{m})}\left(\mathbf{st}, 1^{\lambda}\right)=1\right]-\Pr\left[\mathcal{A}^{\mathcal{O}(\mathbf{h},\cdot)}\left(1^{\lambda}\right)=1\right]\right|< negl(\lambda),
\end{equation}
for all sufficiently large \(\lambda\in \mathbb{N}\), where $negl(\lambda)$ is a negligible function that correlates to the security parameter $\lambda$, the stegosystem is computationally secure.
\end{definition}

\begin{definition}[Correctness]
A stegosystem $\mathcal{S}=(\mathcal{K}, \mathcal{E}, \mathcal{D})$ is correct if, for any message $\mathbf{m}$ and any history \(\mathbf{h}\), there exists a negligible function such that \(\mathcal{E}\) and \(\mathcal{D}\) satisfy the relationship:
\begin{equation}
\Pr\left[\mathcal{D}(\mathbf{k},\mathcal{E}(\mathbf{k},\mathbf{m},\mathbf{h}),\mathbf{h})\neq \mathbf{m}\right] < \delta,
\end{equation}
where \(\delta\) is a correctness parameter, which is typically very close to 0.
\end{definition}

Under the threat model of limited tampering capability, robustness ensures that the secret message can be successfully extracted even if the stegotext is under attack by adversaries.
\begin{definition}[$\delta$-Robustness against $\mathcal{F}_{\alpha,\beta,\gamma}$-Tampering]
A stegosystem $\mathcal{S}=(\mathcal{K}, \mathcal{E}, \mathcal{D})$ is said to be $\delta$-robust if, for all probabilistic polynomial-time adversaries $\mathcal{A}$ who have the \(\mathcal{F}_{\alpha,\beta,\gamma}\)-bounded tampering ability, for any \(f_t\in\mathcal{F}_{\alpha,\beta,\gamma}\), there exists a negligible function such that
\begin{equation}
    \Pr\left[\mathcal{D}(\mathbf{k},f_{t}(\mathcal{E}(\mathbf{k},\mathbf{m},\mathbf{h})),\mathbf{h})\neq \mathbf{m}\right] < \delta.
\end{equation}
\end{definition}

\begin{definition}[Pseudo-Random Number Generator]
A pseudo-random number generator (PRNG) is a deterministic polynomial-time algorithm such that takes a \(\lambda\)-bit seed $\mathbf{k}\in\{0,1\}^{\lambda}$ as input, and ouputs a $m(\lambda)$-bit string, where $m(\lambda)>\lambda$.
There exists a negligible function $negl(\lambda)$ for any probabilistic polynomial time distinguisher $\mathcal{A}$, such that
\begin{equation}
\left|\mathop{\Pr}\limits_{\mathbf{k}\leftarrow\{0,1\}^{\lambda}}\left[\mathcal{A}\left(PRNG(\mathbf{k})\right)=1\right]-\mathop{\Pr}\limits_{\mathbf{u}\leftarrow\{0,1\}^{m(\lambda)}}\left[\mathcal{A}\left(\mathbf{u}\right)=1\right]\right|< negl(\lambda),
\end{equation}
where the seed \(\mathbf{k}\) is randomly selected from \(\{0,1\}^{\lambda}\), and \(\mathbf{u}\) is uniformly chosen from \(\{0,1\}^{m(\lambda)
}\).
\end{definition}

\subsection{Related Work}



\paragraph{Masked diffusion language models.}

Unlike autoregressive models, diffusion language models support parallel generation. 
They mask the original tokens with special symbols \(\mathbf{M}\) with a certain probability during the forward diffusion process, and learn the denoising distribution during the reverse process, thereby generate complete text. The original sequence of tokens $\mathbf{x}_0 = [\mathbf{x}_0^1, \dots , \mathbf{x}_0^L]$ of length $L$.
Define a forward Markov process at time steps $s$ to $t$, 
\begin{equation}
    q(\mathbf{x}_t|\mathbf{x}_s)=\prod_{i=1}^L(\beta_t\mathbbm{1}\{\mathbf{x}_t^i=\mathbf{M}\}+(1-\beta_t)\mathbbm{1}\{\mathbf{x}_t^i=\mathbf{x}_s^i\}),
\end{equation}
where each token at position $i\in\{1,\dots,L\}$ is replaced with the mask symbol $\mathbf{M}$ with probability $\beta_t$, \(\mathbbm{1}(\cdot)\) is the indicator function. The reverse process is given by a parameterized model \(p_{\theta}(\mathbf{x}_s|\mathbf{x}_t)=\prod_{i=1}^L p_{\theta}(\mathbf{x}_s^i|\mathbf{x}_t,t)\). In practice, \(p_{\theta}(\mathbf{x}_s|\mathbf{x}_t)\) is learned through a masked language model (such as BERT or a Transformer encoder), and a categorical distribution is output for each masked position.

\paragraph{Provably secure steganography with autoregressive models.}

Recently, researchers have proposed several provably secure linguistic steganography methods based on ARMs. These methods are dedicated to designing message embedding algorithms that are indistinguishable from the normal generation process, i.e., random sampling. 
Kaptchuk \textit{et al.}~\cite{kaptchuk2021meteor} introduced the Meteor method based on interval reversibility using a random sampling process akin to arithmetic encoding. 
Ding \textit{et al.}~\cite{ding2023discop} utilized the concept of ``sampled distribution'' to express information and presented the Discop method based on distribution copies. This method defines a probability distribution, from which multiple distribution copies are created, and then uses the index values of distribution copies to express messages. 
Wang \textit{et al.}~\cite{wang2025sparsamp} proposed SparSamp, achieving unambiguous message embedding and extraction with a high embedding rate and an added complexity of only \(O(1)\).
The above methods can achieve provably secure linguistic steganography based on ARMs. None of them alter the probability distribution of words to be generated by the model during the process of embedding secret messages. 
But at the same time, none of them are robust to any form of stego tampering.

\paragraph{Comparision to works done concurrently.}

The works conducted concurrently with this paper have also noted the lack of robustness in linguistic steganography.
Perry et al.~\cite{perry2025robust} have fully discussed the robust steganography of LLMs but have ignored the concern for steganographic security. 
Wu et al.~\cite{wu2025gtsd}, who also propose a DLM-based robust linguistic steganography, adopt a coverless paradigm, leading to inconsistent research scope with ours.
Bai et al.~\cite{bai2025provably} proposed a provably secure steganographic method focused on asymmetric resource scenarios (PSARS), which achieves certain robustness while achieving provable security. 


\section{Threat Model}

In this section, we define the adversary's goals and capabilities in robust steganographic scenarios. 

\subsection{Adversary Motivation}
We consider a probabilistic polynomial-time (PPT) adversary operating in a highly regulated environment, responsible for detecting and disrupting covert communication behaviors that may be present in text publications. This involves two key capabilities: first, the ability to detect steganographic text, corresponding to the well-established field of steganalysis, which is concerned with the security of steganography; second, the ability to prevent the steganographic recipient from extracting the secret message, which relates to the robustness of steganography. Since steganographic texts are often transmitted over public channels, this can be achieved through a Man-in-the-middle attack. In this attack, the adversary intercepts the text that the steganographic sender attempts to publish, alters it, and then republishes it, with the receiver only able to access the tampered message. In real-world Internet scenarios, a potential Man-in-the-middle adversary could be a social media operator or an email service provider that is subject to governmental policy requirements.

However, the adversary's ability to modify the text provided by the sender is not without limitations. Excessive regulation or censorship may provoke strong public opposition, and overly stringent policies are often difficult to enforce at a low cost. Internet service providers, driven by business interests, also need to respect users and cannot engage in unlimited alteration of uploaded content. Therefore, the adversary is not aiming to fundamentally alter the content of the text, which creates a trade-off in its ability to interfere with the extraction process.

\subsection{Adversary Capabilities}
\label{sec:threat}

\paragraph{Detection ability.} In a generative linguistic steganography scenario, the ability of a PPT adversary to detect can be defined as the advantage in distinguishing between 
the output of a steganographic embedding algorithm (stegotext) and the original output of the model (covertext).

\paragraph{Tampering ability.} The ability of an adversary to tamper with stegotexts can be defined as a tampering function $f_t$ which ensures that for any stegotext $\mathbf{x}$ as input, the tampered text $f_t(\mathbf{x})$ satisfies that the extent of substitution, insertion, and deletion of words does not exceed $\alpha$, $\beta$, $\gamma$, respectively.
Formally, let $\mathcal{V}$ be the vocabulary and let
$\mathbf{x} = (x^1, x^2, \dots, x^L) \in \mathcal{V}^L$
be the input text of length $L$.  Define the tampering function
$f_t: \mathcal{V}^L \rightarrow \mathcal{V}^*$
and denote its output by
$\mathbf{y} = f_t(\mathbf{x}) \in \mathcal{V}^*$, where \(\mathcal{V}^*\) is with an uncertain length.
Let \(N_{\mathrm{sub}},N_{\mathrm{ins}},N_{\mathrm{del}}\) be the number of substitutions, insertions, and deletions, respectively.
Then for given parameters $0 \le \alpha,\beta,\gamma \le 1$, $f_t$ is required to satisfy \(N_{\mathrm{sub}}(\mathbf{x},\mathbf{y})\le \alpha L\), \(N_{\mathrm{ins}}(\mathbf{x},\mathbf{y})\le \beta L\), \(N_{\mathrm{del}}(\mathbf{x},\mathbf{y})\le \gamma L\).
The tampering function is required to satisfied $f_t \;\in\; \mathcal{F}_{\alpha,\beta,\gamma}$, where \(\mathcal{F}_{\alpha,\beta,\gamma}\) is a feasible set of functions,
\begin{equation}
    \mathcal{F}_{\alpha,\beta,\gamma}(\mathbf{x})
    = \bigl\{
    \mathbf{y}\in\mathcal{V}^* \,\big|\,
    N_{\mathrm{sub}}(\mathbf{x},\mathbf{y}) \le \alpha L,\;
    N_{\mathrm{ins}}(\mathbf{x},\mathbf{y}) \le \beta L,\;
    N_{\mathrm{del}}(\mathbf{x},\mathbf{y}) \le \gamma L
    \bigr\}.\label{temperfunction}
\end{equation}

\section{STEAD Methodology}
In this section,  we will first define the concept of a secure and robust steganographic system and introduce how to utilize the different characteristics of diffusion language models and autoregressive models to find the robust positions for steganographic embedding. Then, we will introduce a provably secure linguistic steganographic method, STEAD, that can achieve robustness against insertion, deletion, substitution, as well as token ambiguity. In this paper, we focus on symmetric key steganographic systems: the sender and receiver share the same settings, including an exact same diffusion language model, an initialized prompt, a pre-negotiated key (seed) of pseudo-random number generator, and a set of sampling parameters. 

\begin{figure}[t]
\centering
    \centering
    \includegraphics[width=\linewidth]{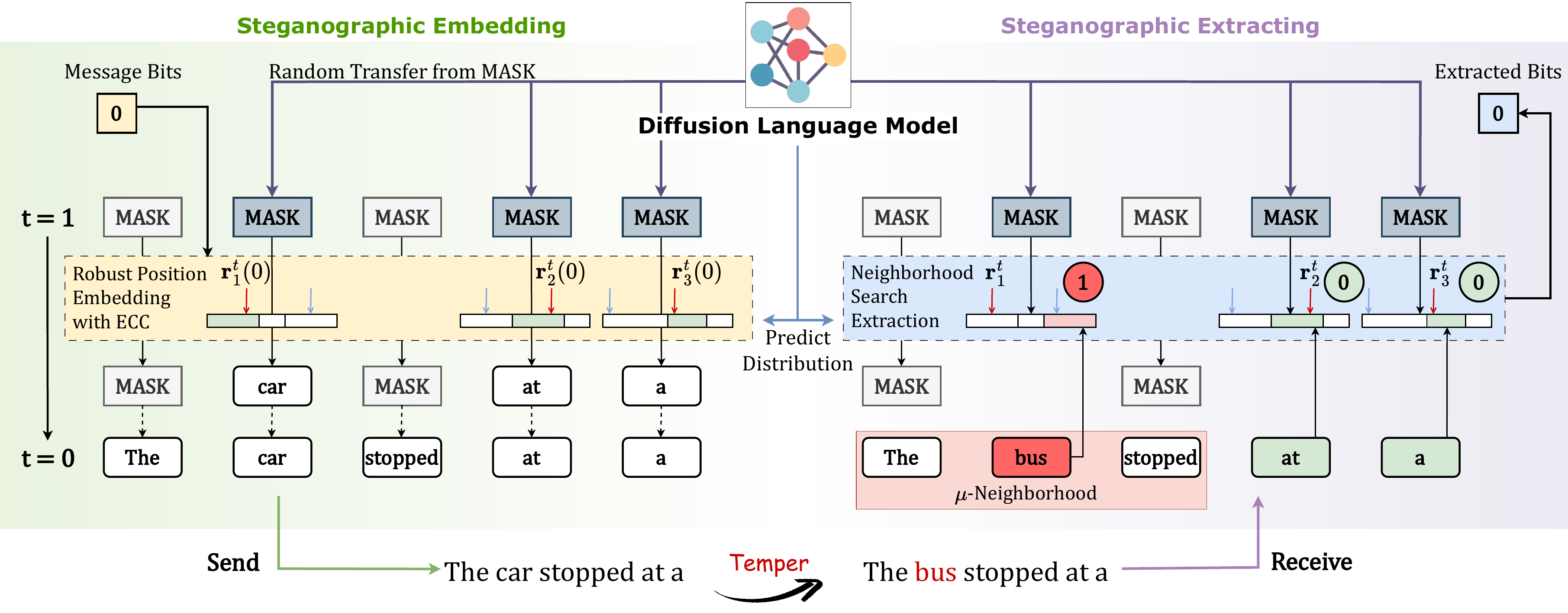}
    \caption{An overview of the proposed STEAD stegosystem.}\vspace{-10pt}
    \label{fig:overview}
\end{figure}

Figure~\ref{fig:overview} shows the embedding and extraction process of steganography using STEAD. Specifically, in the embedding stage, we proposed ``robust position embedding with error correction coding'' for DLM. We only embed messages at positions that are simultaneously and independently denoised to avoid cumulative errors. In each step, we select a fixed embedding capacity based on the minimum entropy of the distribution and use repetitive codes as error correction codes (ECC) to correct errors while minimizing interference between different embedding positions. In the extraction process, in addition to decoding ECC, we also introduce a neighborhood search strategy to address the position offset caused by insertion and deletion.

\subsection{Diffusion Sampling Process}


To achieve robustness in linguistic steganography, we innovatively refer to the multiple positions of tokens that can be synchronously independently sampled in one single timestep of the DLM as a batch of \textbf{robust positions} suitable for steganography, which is significantly different from the generation process of the ARM. 
To generate a token sequence of length $L$, the reverse diffusion process starts with the sequence $\mathbf{x}_{t=1}^{1:L}$ where $\mathbf{x}_{t=1}^{i}$ is fully masked by a special token \(\mathbf{M}\) for all $i \in \{ 1,\dots,L\}$. For timestep $t$ to $s$ with $0\leq s < t \leq 1$, the conditional distribution for the reverse process is factorized as
$p(\mathbf{x}_{s}|\mathbf{x}_{t})=\prod_{i=1}^L p(\mathbf{x}_{s}^i|\mathbf{x}_{t})$,
where the conditional distribution for each token is:
\begin{equation}
    p(\mathbf{x}_{s}^i|\mathbf{x}_{t})=
    \begin{cases}
    1, &  \mathbf{x}_{t}^i\neq \mathbf{M}, \mathbf{x}_{s}^i=\mathbf{x}_{t}^i, \\
    \nicefrac{s}{t}, & \mathbf{x}_{t}^i = \mathbf{M}, \mathbf{x}_{s}^i=\mathbf{M}, \\
    \frac{t-s}{t} p_{\theta}(\mathbf{x}_{s}^i|\mathbf{x}_{t}), & \mathbf{x}_{t}^i = \mathbf{M}, \mathbf{x}_{s}^i \neq \mathbf{M}.
  \end{cases}
\end{equation}

From the above distribution, it can be seen that during the normal generation process, when $\mathbf{x}_{t}^i\neq\mathbf{M}$, that is, the current position has been denoised in previous steps, the token no longer changes, $\mathbf{x}_{s}^i = \mathbf{x}_{t}^i$; when $\mathbf{x}_{t}^i = \mathbf{M}$, the model uses two pseudo-random numbers $\mathbf{r}_{\mathbf{M}}^i$ and $\mathbf{r}_{s}^i$ generated by the PRNG to determine $\mathbf{x}_{s}^i$:
\begin{equation}
    \mathbf{x}_{s}^i=
    \begin{cases}
    \mathbf{M}, &  \mathbf{r}_{\mathbf{M}}^{i}<\nicefrac{s}{t},\\
    {Sample}_{p_{\theta}(\cdot|\mathbf{x}_{t})}(\mathbf{r}_{s}^i), & \mathbf{r}_{\mathbf{M}}^{i} \geq \nicefrac{s}{t},
  \end{cases}
\end{equation}
where \(\mathbf{r}_{\mathbf{M}}^i\) is used to determine whether the state at the current position $i$ is ``keep mask" or ``remove mask", while \(\mathbf{r}_{s}^i\) is used to decide whether to sample a specific text token from the category distribution \(p_{\theta}\) provided by the model. We denote the pseudo-random sampling process of sampling a token \(x\) from the discrete distribution \(p_{\theta}\) using the pseudo-random number \(r\) as \(x\leftarrow{Sample}_{p_{\theta}}(r)\).

Let the number of tokens been unmasked in step \(s\) of the reverse denoising process as \(N_{\text{unmask},s}\), it can be calculated as \(N_{\text{unmask},s}=\sum_{i=1}^{L}{\mathbbm{1}\left[\mathbf{x}_{t}^i = \mathbf{M}\right]\cdot\mathbbm{1}\left[\mathbf{r}_{\mathbf{M}}^{i} \geq \nicefrac{s}{t}\right]}.\)
These tokens will be converted in parallel and independently from masks to specific linguistic tokens within a single time step, and it can be denoted as:
\begin{equation}\label{eq:sample}
    \mathbf{x}_{s}^i={Sample}_{p_{\theta}(\cdot|\mathbf{x}_{t})}(\mathbf{r}_{s}^i).
\end{equation}

It can be seen that the sampling process of the aforementioned diffusion language model can be divided into two stages, each controlled by a sequence of pseudo-random numbers. The first stage is determined by \(\mathbf{r}_{\mathbf{M}}\) to identify which positions will be denoised at the current time step, while the second stage is controlled by \(\mathbf{r}_{s}\), which determines which specific text tokens the masked positions will be predicted as.

\subsection{Message-driven Pseudo-random Number Sampling}

\begin{wrapfigure}[]{r}{0.4\textwidth}
    \centering
    \includegraphics[width=0.4\textwidth]{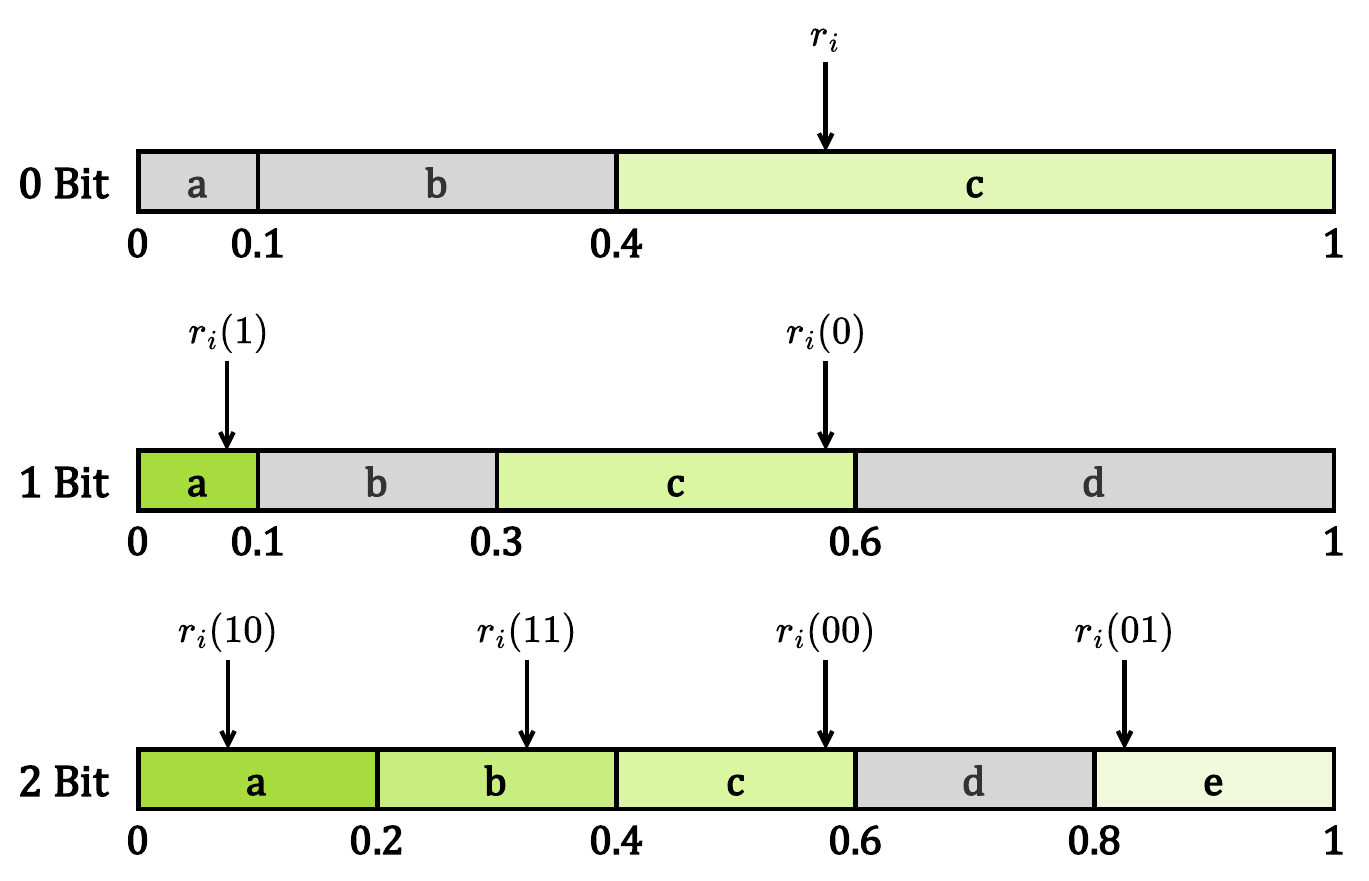}
    \caption{An example of message-driven PRN sampling with different capacity. Tokens that are not selected under the given PRN are marked in gray.}
    \label{fig:example}
\end{wrapfigure}

During steganographic embedding, we keep the first part of pseudo-random-unmasking unchanged, and adopt a message-driven PRN sampling with a fixed embedding capacity to substitute the sampling function in Formula (\ref{eq:sample}). 
Formally, for a distribution $p_{\theta}(\mathbf{x}_{s}^{j}|\mathbf{x}_{t})$ for a position \(j\) that will be unmasked during this step, where \(j \in \{j_1, \dots, j_{N_{\text{unmask},s}}\}\), in step $t$ to $s$, if the embedding capacity is $\ell$, given message bits $\mathbf{m}^{j} \in \{0,1\}^{\ell}$ for this position, the steganographically sampled token (stegotoken) driven by the message is:
\begin{equation}
    \mathbf{x}_{s}^j={Sample}_{p_{\theta}(\cdot|\mathbf{x}_{t})}(\mathbf{r}_{s}^j(\mathbf{m}^{j})),\label{eq:embed}
\end{equation}
where \(\mathbf{r}_{s}^{j}(\mathbf{m}^{j})=[\mathbf{r}_{s}^{j}+\frac{\text{dec}(\mathbf{m}^{j})}{2^{\ell}}]\mod 1,\)
which is a message-driven PRN, \(\text{dec}(\mathbf{m})\) denotes for the decimal representation of \(\mathbf{m}\).
Especially, when \(\ell=0\), message-driven PRN sampling is equivalent to direct pseudo-random sampling.
The algorithm for determining sampled tokens based on message-driven PRNs is presented in Algorithm \ref{algo:embed}.
The corresponding algorithm for extracting secret messages based on distribution and given tokens is Algorithm \ref{algo:extract}. Both are shown in Appendix \ref{app:A}.

Figure~\ref{fig:example} shows how to use the embedding algorithm to embed messages into distributions with different capacities. Taking the distribution in the third row as an example, given an origin pseudo-random number , if the message to be embedded is ``01'', the token ``e'' will be selected. Similarly, during the extraction process, the message can be obtained based on the token and the PRN. 

It is noteworthy that the marks that can be selected by the message after the given random number are limited. If an attempt is made to extract a message from a mark that cannot be selected, the extraction will fail. For example, the mark ``d'' in the third row of the figure does not correspond to any message. We call the characteristic of steganography based on message-driven pseudo-random number sampling that can only extract messages from specific tokens its \textbf{error localization property}.

\subsection{Robust Position Embedding with Repetitive Error Correction Coding}

In robust provable security steganography, token tampering differs from traditional bit flipping: each token may carry multiple bits, and tampering may cause all bits carried by the token to appear as random errors. Moreover, due to the dependence of the embedding and extraction algorithms on the model-predicted categorical distribution, incomplete token recovery will interfere with the subsequent sampling distribution, thereby invalidating message extraction.

Due to the independence of the denoising positions at the same time step, we have the opportunity to apply an error-correcting code (ECC) during the embedding process. When choosing an ECC, we hope to maintain the independence of the distribution as much as possible to reduce the entanglement between simultaneously sampled tokens; in addition, since the capacity of the secure steganography method is limited by information entropy, the code length is quite short.
Formally, during the reverse denoising process from time \(t\) to \(s\), where positions \(j_{1},\dots,j_{N_{\text{unmask},s}}\) are unmasked, a message of length $\ell_s$ is embedded into these positions using the message-driven pseudo-random steganography algorithm. We must select $\ell_s$ for all these positions in each denoising step.

Firstly, any stegosystem must meet the requirement of correctness. 
It can be seen from Figure~\ref{fig:example} that if 2-bit messages are embedded in the distribution of the first row based on the given pesudo-random number, the same token ``c'' will be sampled using messages ``00'' and ``01''. As a result, conflicts will occur when extracting the messages.
For each independent distribution \(p_{\theta}\), the maximum capacity that can be embedded in a distribution without causing message conflicts depends on the negative logarithm of the probability of the most likely outcome, i.e., the min-entropy. 
Discop~\cite{ding2023discop} and SparSamp~\cite{wang2025sparsamp} respectively provided solutions to the decoding uniqueness problem in message-driven random number sampling steganography systems, while also striving to maximize entropy utilization. However, their embedding schemes all require the use of multi-step joint distributions, which conflicts with our original intention of error correction for independent embedding.

Therefore, considering the trade-off between error-correcting capability and embedding capacity, to prioritize the strongest robustness, a repetition code was chosen in our method, that is, a same message $\mathbf{m} \in \{0,1\}^{\ell_s}$ is embedded for all \(j\in \{j_1, \dots, j_{N_{\text{unmask},s}}\}\) with the same length $\ell_s$ for step \(s\).

For \(j\in \{j_1, \dots, j_{N_{\text{unmask},s}}\}\), the maximum embedding capacity is determined by the minimum $\ell$ value corresponding to all these positions' distributions to satisfy the min-entropy constraint.
Besides, due to the requirement of repetition codes, we only embed secret messages in steps that meet the condition ${N_{\text{unmask},s}}\geq 3$.
According to the aforementioned conditions, set
\begin{equation}
\ell_s=
\begin{cases}
    \min_{j\in\{j_1, \dots, j_{N_{\text{unmask},s}}\}}\left(\left\lfloor-\log_2\left(\max\left(p_{\theta}(\mathbf{x}_{s}^{j}|\mathbf{x}_{t})\right)\right)\right\rfloor\right), &  {N_{\text{unmask},s}}\geq 3, \\
    0, & {N_{\text{unmask},s}}<3.\label{eq:capacity}
\end{cases}
\end{equation}
which means that, for steps where \({N_{\text{unmask},s}}\) is insufficient or the minimum entropy is not enough, no message is involved in the generation of tokens. 

We refer to positions that meet these two conditions as \textbf{robust positions}. Robust positions are a subset of denoising positions and satisfy the following:
(1) The number of such positions in each step is at least 3;
(2) The minimum entropy of the distribution allows each position to embed at least 1 bit.
When the denoising position set of a denoising step meets the above conditions, STEAD will embed the same message bits at these positions. We denote the number of robust positions as \({N_{\text{robust},s}}\), abbreviated as \({N_{s}}\) (used to distinguish from the number of general denoising locations, \({N_{\text{unmask},s}}\)).

For non-robust positions, sampling is entirely guided by PRNs \(\mathbf{r}_s\). Since the sender and receiver can fully synchronize the pseudo-random number generator and the diffusion language model, the pseudo-random numbers here can serve as pseudo-random error-correcting codes~\cite{christ2024pseudorandom}, to cope with possible token substitutions at these non-robust positions.

\subsection{Pseudo-random Error Correction and Neighborhood Search Extraction}

During extraction, since the receiver shares the same model, PRNG, sampling settings, and key with the sender, the denoising process in the extraction process can be fully synchronized with the embedding process.
For the receiver, all text positions can be divided into two categories: one is the robust position that embeds the message, and the other is the non-robust position that uses pseudo-random number sampling without embedding the message. Although the message is only embedded in the robust position, due to the distribution dependency between the preceding and succeeding time steps, the receiver needs to recover from all possible errors at all positions, rather than just dealing with the robust positions.

Firstly, we only consider token substitution as the tampering, that is, each token maintains its position in the tampered sequence as it was during generation.
For each batch of robust positions, the receiver first applies the message-driven pseudo-random sampling algorithm to each position to extract the message embedded within the tokens. The intended message bits are then obtained by decoding the result with the repetition code. Correct message recovery is guaranteed as long as the number of tampered tokens in the batch does not exceed half. Subsequently, the receiver can re-run the message embedding process using the recovered message to restore any altered tokens.
For non-robust positions, the pseudo-random numbers used for sampling are shared between the sender and receiver, allowing them to function as a pseudo-random error-correcting code. Consequently, the receiver can detect if a non-robust position has been tampered with by observing the outcome of the pseudo-random sampling and can readily restore the original token.


\begin{wrapfigure}[]{r}{0.5\textwidth}
    \centering
    \includegraphics[width=0.5\textwidth]{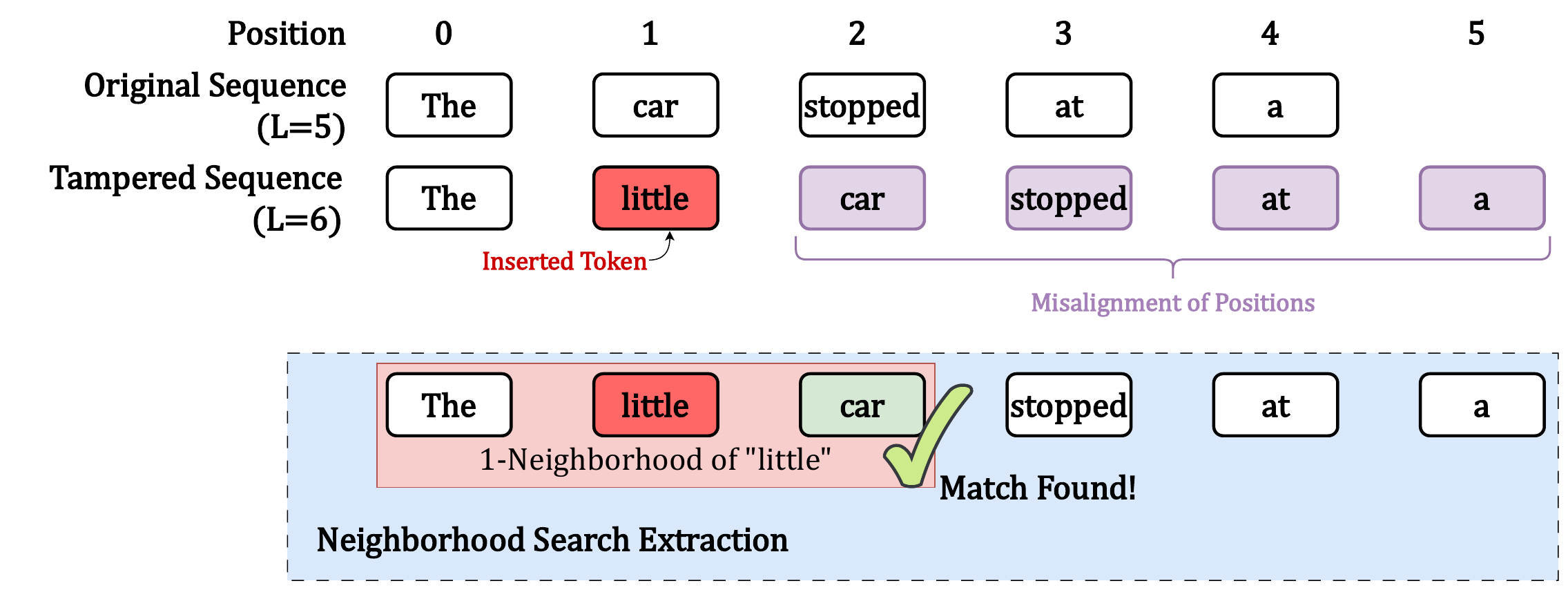}
    \caption{An example of positional misalignment due to an inserted token.}
    \label{fig:error}
\end{wrapfigure}

However, if errors such as insertion or deletion occur, it may cause the position of the stegotoken to be misaligned with its original sampling position during embedding. In severe cases, it may affect the message extraction of all tokens in a batch of robust positions. As shown in Figure~\ref{fig:error}, an inserted token ``little'' causes the original tokens at the denoising positions to shift a certain distance from their supposed positions. To address this, we have designed a \(\mu\)-nearest neighbor search mechanism to handle the extraction difficulties caused by the increase or decrease in token quantity.
In particular, when an error occurs during extraction, we search the \(\mu\)-neighborhood corresponding to the current position, attempting to use neighborhood tokens to correct the offset of the current position.
The window size \(\mu\) adjusts dynamically according to the length of the tampered text: \(\mu=\max(2,|L-L'|)\), where \(L\) represents the length of the original sequence, and \(L'\) represents the length after tampering. This strategy adapts to different attack strengths while avoiding excessive computational complexity.



\subsection{Proof of Security and Robustness}\label{sec:proof}

The security of STEAD comes from the fact that it only replaces the sampling function within each timestep of the diffusion language model during the steganographic embedding, without damaging the model's predicted probability distribution.
Below, we prove that the stegotext output by the steganographic embedding algorithm and the original output randomly sampled by the model are computationally indistinguishable.

\begin{theorem}
For any polynomial-time distinguisher \(\mathcal{A}\), it is computationally infeasible to distinguish between the stegotext of the steganographic encoding algorithm \(\mathcal{E}\) and the output of the original model sampler \(\mathcal{O}\).
\end{theorem}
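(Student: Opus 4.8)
The plan is to establish security by a standard hybrid argument that isolates the only modification STEAD makes to the reverse diffusion process: the replacement of the plain pseudo-random sampling $\mathbf{r}_s^i$ with the message-driven PRN $\mathbf{r}_s^j(\mathbf{m}^j)=[\mathbf{r}_s^j+\text{dec}(\mathbf{m}^j)/2^\ell]\bmod 1$. First I would reduce the real-versus-ideal comparison to a statement about pseudo-random numbers: define an intermediate game $\mathcal{G}$ in which the PRNG-derived bitstreams $\mathbf{r}_{\mathbf{M}}$ and $\mathbf{r}_s$ are replaced by genuinely uniform bits drawn from $\{0,1\}^{m(\lambda)}$. By the PRNG security definition (Definition~1.5), any PPT distinguisher's advantage in telling $\mathcal{G}$ apart from the real embedding (and likewise from the real sampler $\mathcal{O}$) is bounded by $negl(\lambda)$; otherwise we build a distinguisher contradicting the PRNG guarantee. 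So it suffices to argue indistinguishability in the idealized world where all randomness is truly uniform.

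Second, I would show that in the idealized world the output distribution of $\mathcal{E}$ is \emph{identical} to that of $\mathcal{O}$, step by step through the reverse process. The key observation is the shift-invariance of the uniform distribution on $[0,1)$ modulo $1$: if $r$ is uniform on $[0,1)$ then $[r+c]\bmod 1$ is also uniform on $[0,1)$ for any fixed $c$, in particular for $c=\text{dec}(\mathbf{m}^j)/2^\ell$. Hence for each denoised position $j$, the token $\mathbf{x}_s^j={Sample}_{p_\theta(\cdot|\mathbf{x}_t)}(\mathbf{r}_s^j(\mathbf{m}^j))$ is distributed exactly as ${Sample}_{p_\theta(\cdot|\mathbf{x}_t)}(\mathbf{r}_s^j)$, i.e.\ according to the model's own categorical distribution $p_\theta(\cdot|\mathbf{x}_t)$. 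The unmasking decisions governed by $\mathbf{r}_{\mathbf{M}}$ are untouched, so $N_{\text{unmask},s}$ and the set of denoised positions have the same law as under $\mathcal{O}$. Since this holds conditionally on $\mathbf{x}_t$ at every step, an induction over the (polynomially many) denoising steps $t\to s$ gives that the full stegotext $\mathbf{st}$ has the same distribution as a genuine sample from the diffusion model. One subtlety to handle carefully: the message bits $\mathbf{m}^j$ are fixed (adversarially chosen) rather than random, but shift-invariance holds for every fixed shift, so the argument goes through uniformly in $\mathbf{m}$; this is where I would be most careful to state that the equality of distributions is pointwise over messages and histories, which also covers the chosen-hidden-text flavor of Definition~1.2.

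Finally I would assemble the two pieces: $|\Pr[\mathcal{A}^{\mathcal{E}}=1]-\Pr[\mathcal{A}^{\mathcal{O}}=1]|\le |\Pr[\mathcal{A}^{\mathcal{E}}=1]-\Pr[\mathcal{A}^{\mathcal{G}}=1]|+|\Pr[\mathcal{A}^{\mathcal{G}}=1]-\Pr[\mathcal{A}^{\mathcal{O}}=1]|\le negl(\lambda)+0$, using the reduction to PRNG security for the first term and the exact distributional identity for the second (applied once more to move $\mathcal{O}$ into the idealized world). I expect the main obstacle to be not any single hard step but rather bookkeeping: making the hybrid over denoising steps rigorous while the number of steps, the unmasking pattern, and the embedding capacities $\ell_s$ are all themselves functions of the sampled randomness, and ensuring the PRNG reduction is clean even though $\mathbf{r}_{\mathbf{M}}$ and $\mathbf{r}_s$ are different slices of the same generator output. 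I would address this by committing up front to a fixed polynomial bound on the total number of PRNG bits consumed and treating the entire per-run randomness as one long string, so the PRNG substitution is a single invocation of Definition~1.5.
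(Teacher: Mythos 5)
Your proposal is correct and takes essentially the same approach as the paper: both isolate the message-driven shift of the pseudo-random number as the only difference between $\mathcal{E}$ and $\mathcal{O}$, invoke the shift-invariance of the uniform distribution modulo $1$, and reduce any remaining advantage to breaking the PRNG. The paper phrases this as a reduction by contradiction (a distinguisher for $\mathbf{r}(\mathbf{m})$ versus $\mathbf{r}$ yields a PRNG distinguisher) rather than your forward hybrid through an idealized game, but the two ingredients and the overall argument are the same.
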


\begin{lemma}\label{le}
For any polynomial-time distinguisher \(\mathcal{A}\), its advantage in distinguishing between the output of \(\mathcal{E}\) and \(\mathcal{O}\) can be reduced to an advantage in distinguishing between \(\mathbf{r}(\mathbf{m})\) and \(\mathbf{r}\).
\end{lemma}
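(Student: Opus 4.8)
\textbf{Proof plan for Lemma~\ref{le}.} The plan is to give a tight, information-preserving reduction. I would first make precise that, once the shared secret key (hence the PRNG output), the prompt $\mathbf{h}$, the model $p_{\theta}$, and the sampling schedule are fixed, both the honest sampler $\mathcal{O}$ and the steganographic encoder $\mathcal{E}$ are \emph{deterministic} maps whose only free input is the stream of token-sampling numbers. The unmasking coordinates $\mathbf{r}_{\mathbf{M}}$ are consumed identically in the two worlds — they determine $N_{\text{unmask},s}$ and which positions are revealed at each denoising step — so the two procedures can diverge only at the calls $\mathrm{Sample}_{p_{\theta}(\cdot|\mathbf{x}_{t})}(\cdot)$, which in $\mathcal{O}$ receive $\mathbf{r}_s^i$ and in $\mathcal{E}$ receive $\mathbf{r}_s^j(\mathbf{m}^j)$. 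At every non-robust position $\ell_s=0$, so $\mathbf{r}_s^j(\mathbf{m}^j)=\mathbf{r}_s^j$ and the two maps literally agree there; the divergence is confined to robust positions, where $\mathcal{E}$ feeds the message-shifted number of \eqref{eq:embed}.

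Next I would define the object $\mathbf{r}(\mathbf{m})$ precisely as the stream obtained from $\mathbf{r}$ by applying, at each robust coordinate $j$ revealed in step $s$, the shift $[\,\cdot+\text{dec}(\mathbf{m}^j)/2^{\ell_s}\,]\bmod 1$, and the identity at every other coordinate — exactly the stream implicitly produced inside Algorithm~\ref{algo:embed}. The reduction $\mathcal{B}$ then works as follows: it receives a stream $\mathbf{z}$ promised to be distributed either as $\mathbf{r}$ or as $\mathbf{r}(\mathbf{m})$, runs the reverse diffusion process described in the preceding subsection using the fixed $\mathbf{r}_{\mathbf{M}}$ for unmasking and $\mathbf{z}$ for token sampling, obtains a text $\mathbf{st}$, hands $(\mathbf{st},1^{\lambda})$ to $\mathcal{A}$, and outputs $\mathcal{A}$'s bit. (If $\mathcal{A}$ is an oracle distinguisher as in the computational-security definition, $\mathcal{B}$ answers each of its queries by drawing a fresh stream from the corresponding distribution.) The crucial observation justifying this is that the shift amount at a robust coordinate depends only on $\mathbf{m}$ and on $\ell_s$, and $\ell_s$ is in turn computable from the partial sequence $\mathbf{x}_t$ already generated via \eqref{eq:capacity}; hence $\mathcal{B}$, which knows $\mathbf{m}$, $\mathbf{h}$, and $p_{\theta}$, can reconstruct the entire schedule of shifts on the fly as it runs the denoising loop on whatever stream it was handed, and $\mathbf{r}(\mathbf{m})$ is a well-defined distribution.

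It then remains to check the two simulation identities: conditioned on $\mathbf{z}=\mathbf{r}(\mathbf{m})$ the produced text is distributed exactly as $\mathcal{E}(\mathbf{k},\mathbf{h},\mathbf{m})$, and conditioned on $\mathbf{z}=\mathbf{r}$ it is distributed exactly as $\mathcal{O}(\mathbf{h},\cdot)$; both are immediate since $\mathcal{B}$ reproduces $\mathcal{E}$ and $\mathcal{O}$ verbatim on those inputs. Consequently the advantage of $\mathcal{B}$ in distinguishing $\mathbf{r}(\mathbf{m})$ from $\mathbf{r}$ equals the advantage of $\mathcal{A}$ in distinguishing $\mathcal{E}$ from $\mathcal{O}$, which is the claimed reduction. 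I expect the only delicate point to be the bookkeeping around the data-dependence of $\ell_s$ and of the robust-position set: one must verify that this dependence is on information $\mathcal{B}$ can recompute from $\mathbf{z}$, $\mathbf{r}_{\mathbf{M}}$, and $\mathbf{m}$ alone — which it is, because the unmasking decisions are fixed and shared and the revealed tokens are exactly what $\mathcal{B}$ is already computing — so that the simulation never "gets stuck" and $\mathbf{r}(\mathbf{m})$ is unambiguous. Everything else (that the map $u\mapsto[u+c]\bmod 1$ is a bijection of $[0,1)$, and that $\mathrm{Sample}$ returns the same token on identical inputs) is routine.
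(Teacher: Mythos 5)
Your reduction is correct and is essentially the paper's own argument for Lemma~\ref{le}: the paper likewise observes that $\mathcal{E}$ and $\mathcal{O}$ differ only in the pseudo-random number fed to $\mathrm{Sample}_{p_{\theta}}$ (namely $\mathbf{r}^{j}_{s}(\mathbf{m})$ versus $\mathbf{r}^{j}_{s}$), with the unmasking stream $\mathbf{r}_{\mathbf{M}}$ consumed identically, and concludes the reduction from there. You merely make explicit the simulator, the two simulation identities, and the bookkeeping around the data-dependent $\ell_s$ and robust-position set, all of which the paper compresses into a one-sentence justification.
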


\begin{proof}
From Formula (\ref{eq:sample}) and Formula (\ref{eq:embed}), it follows that at each timestep \(s\), the only difference between an original sampled token and one from the steganographic encoding algorithm is the pseudo-random number used: \(\mathbf{r}^{j}_{s}\) versus \(\mathbf{r}^{j}_{s}(\mathbf{m})\). Since both procedures use the same PRNG as the original diffusion model, we derive Lemma~\ref{le}.

Suppose, for contradiction, that there exists a PPT distinguisher \(\mathcal{A}\) that can distinguish between PRNs after ``offset'' and pure PRNs with a non-negligible advantage \(\epsilon(\lambda)\) such that
\(\left|\Pr[\mathcal{A}(\mathbf{r}(\mathbf{m}))=1]-\Pr[\mathcal{A}(\mathbf{r})=1]\right|=\epsilon(\lambda)\),
we use \(\mathcal{A}\) to construct a new algorithm \(\mathcal{A}'\), which can distinguish between PRNG outputs and truly uniform bits. On input challenge \(X\in\{0,1\}^{m(\lambda)}\), \(\mathcal{A}'\) answers \(\mathcal{A}\)’s oracle queries by
\(Y\;=\;\left[X+\tfrac{\text{dec}(\mathbf{m})}{2^\ell}\right]\bmod1\),
and outputs whatever \(\mathcal{A}\) outputs. 

If the challenge is $X\leftarrow PRNG(\mathbf{k})$, then \(\mathcal{A}'\) gives \(\mathcal{A}\) exactly \(\mathbf{r}(\mathbf{m})\), at this time \(\Pr[\mathcal{A}'=1|X\leftarrow PRNG(\mathbf{k})]= \Pr[\mathcal{A}(\mathbf{r}(\mathbf{m}))=1]\). If the challenge $X$ is uniformly random, then \(\left[X+\tfrac{\text{dec}(\mathbf{m})}{2^\ell}\right]\bmod1\) is still uniformly random (constant offset does not change the uniform distribution), and \(\Pr[\mathcal{A}'=1|X\leftarrow U_{m(\lambda)}]= \Pr[\mathcal{A}(\mathbf{r})=1]\).
Then we have
\(\left|\Pr[\mathcal{A}'(PRNG(\mathbf{k}))=1] - \Pr[\mathcal{A}'(\mathbf{u})=1]\right| = \left|\Pr[\mathcal{A}(\mathbf{r}(\mathbf{m}))=1]-\Pr[\mathcal{A}(\mathbf{r})=1]\right|=\epsilon(\lambda)\),
so \(\mathcal{A}'\) distinguishes \(PRNG(\mathbf{k})\) from uniform \(\mathbf{u}\leftarrow U_{m(\lambda)}\) with non-negligible advantage \(\epsilon(\lambda)\), contradicting the definition of PRNG.

Therefore, the stegotext output by the steganographic embedding algorithm and the original output randomly sampled by the model are computationally indistinguishable.
\end{proof}


\begin{theorem}[ $\mathcal{F}_{\alpha,\beta,\gamma}$-Robustness]\label{th:robust}
Let the stego length be $L$. 
For every step $0\leq s< t \leq 1$, if the adversary's tampering capabilities are bounded by \(\mathcal{F}_{\alpha,\beta,\gamma}\) that satisfy
\(2(\alpha+\beta+\gamma) < \min_{s} \frac{N_s}{L},\beta + \gamma < \frac{\mu}{L}\),
where $N_s$ is the number of robust positions in time $s$, a STEAD stegosystem \(\mathcal{S}=(\mathcal{K},\mathcal{E},\mathcal{D})\) with a denoising probability $\frac{t-s}{t}$ and $\mu$-neighborhood searching is robust.
\end{theorem}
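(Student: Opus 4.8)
\textbf{Proof proposal for Theorem~\ref{th:robust}.}

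The plan is to bound the probability that decoding fails under an arbitrary $f_t \in \mathcal{F}_{\alpha,\beta,\gamma}$ by decomposing the tampering into its positional effect (insertions/deletions, governed by $\beta,\gamma$) and its content effect (substitutions, governed by $\alpha$), and then showing that each batch of robust positions can absorb its share of errors. First I would fix the total number of tampered tokens: by the constraints defining $\mathcal{F}_{\alpha,\beta,\gamma}$, at most $(\alpha+\beta+\gamma)L$ token positions are affected in the original indexing. I would argue that after the $\mu$-neighborhood search realigns positions — which succeeds whenever the net length change $|L-L'|\le\beta L+\gamma L$ stays within the window $\mu$, exactly the condition $\beta+\gamma<\mu/L$ — every robust batch is recovered at its correct coordinates, so that within each denoising step $s$ the tokens arriving at the $N_s$ robust positions differ from the embedded ones only through substitution-type corruptions (including those induced by nearby insertions/deletions counted against $\beta,\gamma$).

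Next I would invoke the repetition-code decoding guarantee already stated in the text: the identical message $\mathbf{m}$ is embedded at all $N_s$ robust positions of step $s$, so majority decoding recovers $\mathbf{m}$ correctly as long as fewer than $N_s/2$ of those positions are corrupted. Summing over all steps, the total number of corrupted positions is at most $(\alpha+\beta+\gamma)L$; since these are distributed among the batches and each batch of size $N_s$ tolerates up to $\lfloor N_s/2\rfloor$ errors, a sufficient (worst-case) condition is $(\alpha+\beta+\gamma)L < \tfrac12 \min_s N_s$, i.e. $2(\alpha+\beta+\gamma) < \min_s N_s/L$, which is precisely the hypothesis. I would then note that once every batch's message is recovered, the sender's embedding procedure can be re-run deterministically (sender and receiver share model, PRNG seed, prompt, and sampling parameters), restoring the true stegotokens at robust positions; and for the non-robust positions, the shared PRNs act as a pseudo-random error-correcting code in the sense of \cite{christ2024pseudorandom}, letting the receiver detect and correct any substitution there by checking consistency with $Sample_{p_\theta}(\mathbf{r}_s)$. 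Because each recovered step reproduces $\mathbf{x}_s$ exactly, the conditioning context for step $s'<s$ is correct, so the error-localization property keeps errors from propagating across steps and the induction over denoising steps closes.

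I would finish by collecting the failure modes into a union bound: decoding fails only if (a) the length change exceeds the search window, ruled out by $\beta+\gamma<\mu/L$; (b) some batch sees at least half its positions corrupted, ruled out by $2(\alpha+\beta+\gamma)<\min_s N_s/L$; or (c) the PRNG is distinguished from uniform, which by the reduction in Lemma~\ref{le} happens only with negligible probability $negl(\lambda)$. Hence $\Pr[\mathcal{D}(\mathbf{k},f_t(\mathcal{E}(\mathbf{k},\mathbf{m},\mathbf{h})),\mathbf{h})\ne\mathbf{m}] < \delta$ for $\delta = negl(\lambda)$, giving $\mathcal{F}_{\alpha,\beta,\gamma}$-robustness. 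The main obstacle I anticipate is part (a)/(b) bookkeeping: carefully accounting for how a single insertion or deletion can contaminate more than one robust position through the realignment, and verifying that the neighborhood search provably finds the correct shift rather than a spurious alignment — in other words, making the "misalignment $\Rightarrow$ bounded number of extra corrupted positions per batch" step rigorous rather than heuristic, since the clean factor-of-two in the hypothesis implicitly assumes each tampering operation charges at most a bounded number of robust positions.
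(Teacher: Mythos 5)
Your proposal is correct and follows the same two-case decomposition as the paper's proof in Appendix~B: (i) bound the number of corrupted tokens landing in any single batch of robust positions so that majority decoding of the repetition code succeeds, and (ii) show that the total positional drift caused by insertions and deletions stays inside the $\mu$-window so that realignment succeeds, with the leftover effect of insertions/deletions charged back to the substitution budget. The one substantive difference is in step (i): the paper bounds the probability that more than $N_s/2$ substitutions fall into a given step's batch via a Chernoff concentration bound, relying on the fact that the $N_s$ denoised positions are chosen (pseudo)randomly and are hidden from the adversary, and concludes the failure probability is negligible in $L$. You instead use a deterministic worst-case count: since the hypothesis $2(\alpha+\beta+\gamma)<\min_s N_s/L$ already forces the \emph{total} number of tampered tokens below $\tfrac12\min_s N_s$, no batch can ever see half its positions corrupted, regardless of where the adversary aims. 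Your argument is simpler and strictly stronger under the stated hypothesis — it needs no randomness assumption on the batch locations and holds even against an adversary who could adaptively target robust positions — whereas the paper's probabilistic route would only pay off if one wanted to relax the hypothesis to a per-batch fractional condition (tolerating $\alpha L \gg N_s/2$ total substitutions spread thinly). You also supply steps the paper's proof leaves implicit but states elsewhere in the text (re-running the embedding to restore tokens, PRN-based correction at non-robust positions, and the induction across denoising steps that stops error propagation), and you correctly flag the same bookkeeping weakness the paper glosses over, namely that a single insertion or deletion can perturb several robust positions after realignment; the paper handles this only by the informal remark that $\beta$ and $\gamma$ must be ``added to the ECC boundaries.''
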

Proof of Theorem~\ref{th:robust} is given in Appendix~\ref{app:B}. It is worth noting that all the discussions about tampering above are conducted at token-level. Other advanced forms of perturbation, such as tokenization ambiguity (see Appendix~\ref{app:C}) or synonym substitution, can ultimately be reduced to combinations of token-level substitution, insertion, and deletion.

\vspace{-10pt}\section{Experiments}\label{sec:exp}


Detailed experimental settings are provided in Appendix \ref{app:D}.

\begin{table}[]
    \centering
    \caption{Comparison of capacity and overhead of provably secure robust steganography methods}\label{tab:1}
    \vspace{5pt}
    \resizebox{\columnwidth}{!}{
    \small
    \begin{tabular}{l | c | c | c | c | c | c | c}
        \toprule
        Method & Model & Temperature & Top-$p$ &
        \begin{tabular}[c]{@{}c@{}}Embedding\\capacity\\(bit / $10^3$ token)\end{tabular} &
        \begin{tabular}[c]{@{}c@{}}Entropy\\(bit / token)\end{tabular} &
        \begin{tabular}[c]{@{}c@{}}Encoding\\rate\\(s / bit)\end{tabular} &
        \begin{tabular}[c]{@{}c@{}}Decoding\\rate\\(s / bit)\end{tabular}\\
        \midrule
        \cellcolor[HTML]{EFEFEF}\multirow{1}{*}{PSARS \cite{bai2025provably}} 
                             & \cellcolor[HTML]{EFEFEF}\textsc{Qwen2}  & \cellcolor[HTML]{EFEFEF}1.0 & \cellcolor[HTML]{EFEFEF}1.00 &  \cellcolor[HTML]{EFEFEF}13.81 & \cellcolor[HTML]{EFEFEF} 3.48 & \cellcolor[HTML]{EFEFEF} 1.6583 & \cellcolor[HTML]{EFEFEF} 1.5893 \\ \midrule
        \multirow{4}{*}{STEAD} & \multirow{4}{*}{\textsc{Dream}} & \multirow{4}{*}{1.0}
                & 1.00 &   84.08 & 7.78 & 0.9859 & 1.0712  \\
            & & & 0.98 &   58.31 & 6.54 & 1.4431 & 1.4949  \\
            & & & 0.92 &   36.33 & 4.59 & 2.3471 & 2.2642  \\
            & & & 0.90 &   33.23 & 4.20 & 2.6030 & 2.3012  \\
        \bottomrule
    \end{tabular}}
\end{table}

\begin{figure}[t]
    \centering
    \begin{subfigure}{.2\textwidth}
        \centering
        \includegraphics[height=1.05\textwidth]{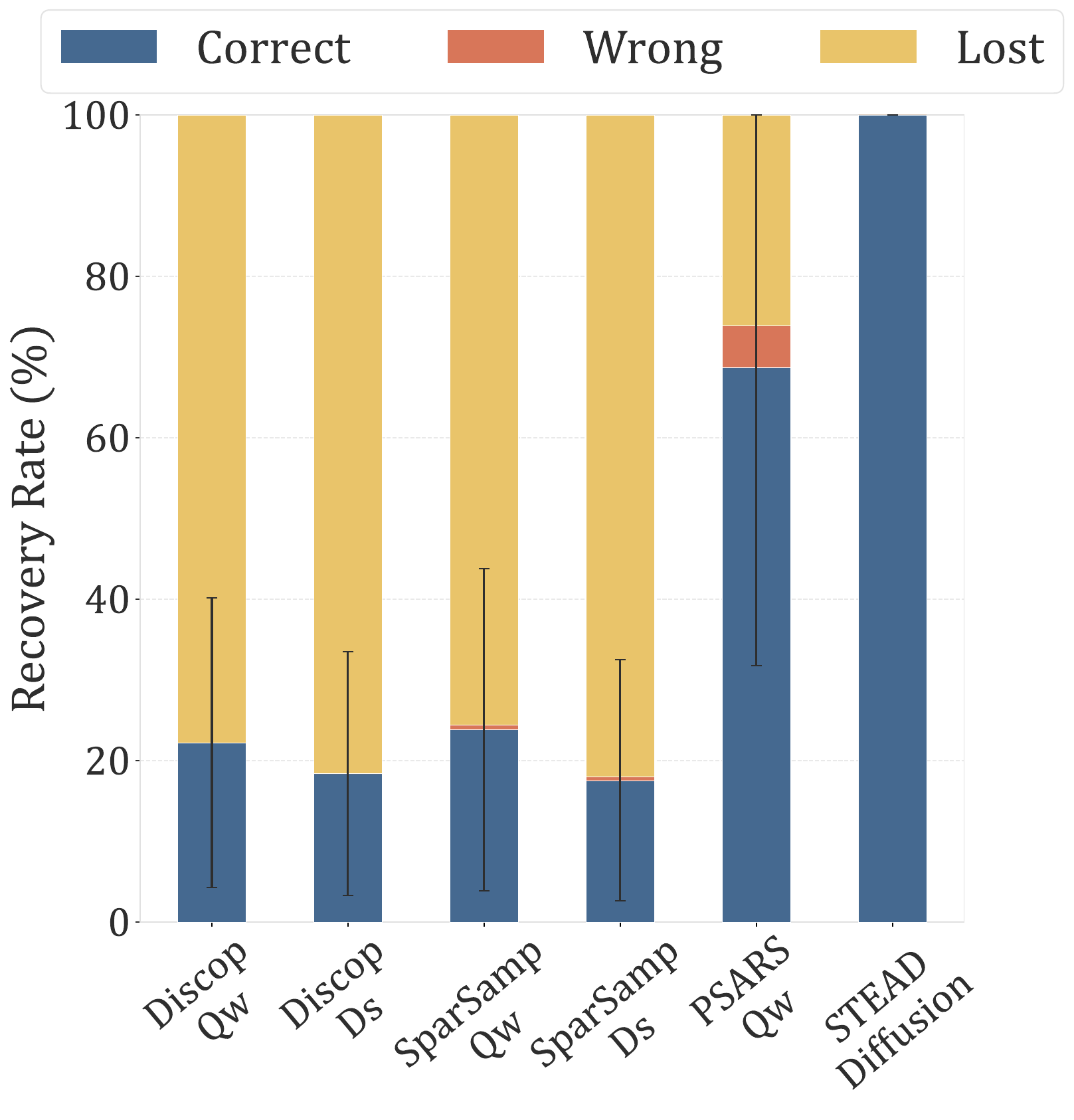}\vspace{-8pt}
        \caption{$\alpha=0.01$}
        \label{fig:a1}
    \end{subfigure}%
    \begin{subfigure}{.2\textwidth}
        \centering
        \includegraphics[height=1.05\textwidth]{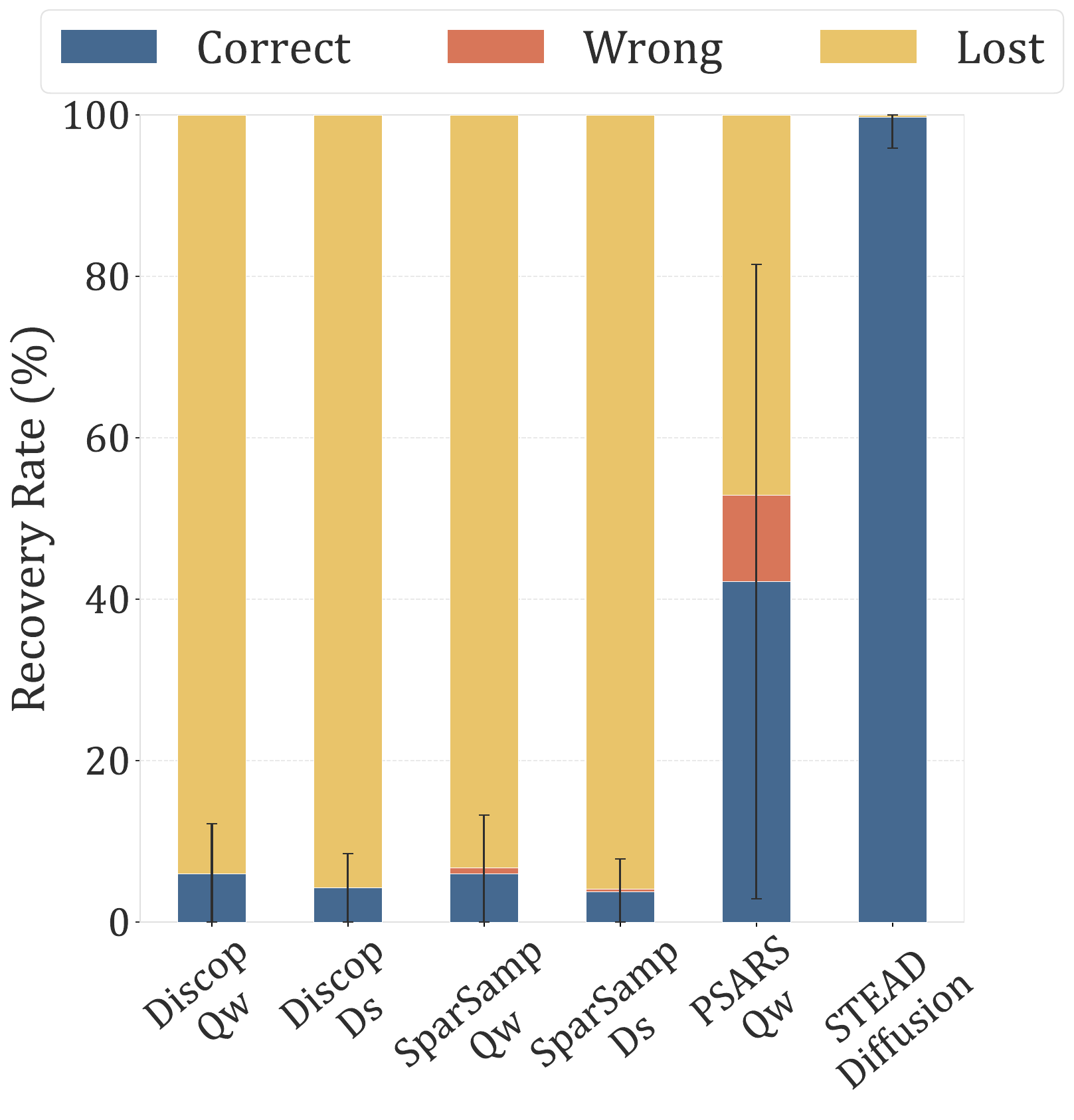}\vspace{-8pt}
        \caption{$\alpha=0.05$}
        \label{fig:a2}
    \end{subfigure}%
    \begin{subfigure}{.2\textwidth}
        \centering
        \includegraphics[height=1.05\textwidth]{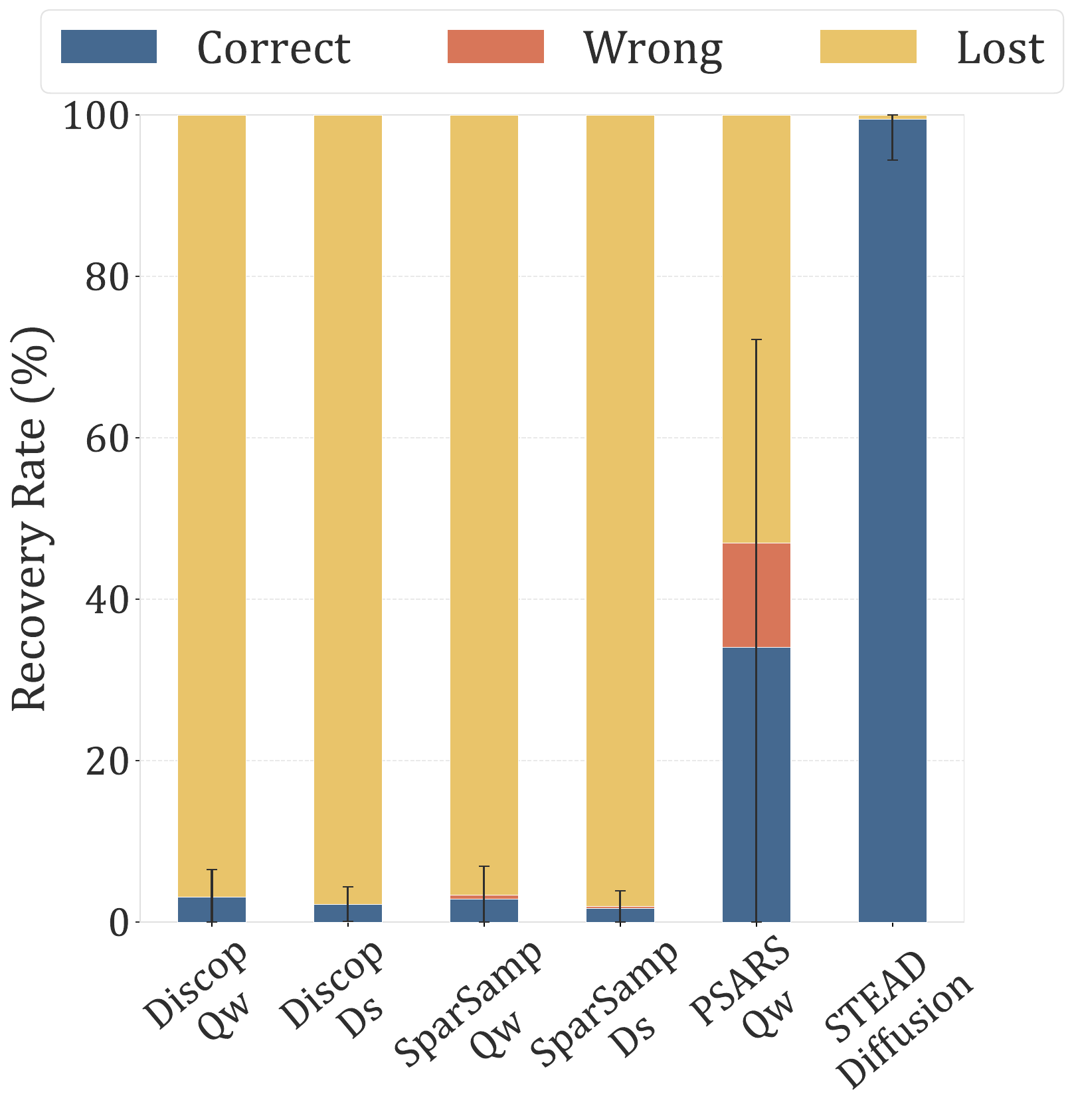}\vspace{-8pt}
        \caption{$\alpha=0.1$}
        \label{fig:a3}
    \end{subfigure}%
    \begin{subfigure}{.2\textwidth}
        \centering
        \includegraphics[height=1.05\textwidth]{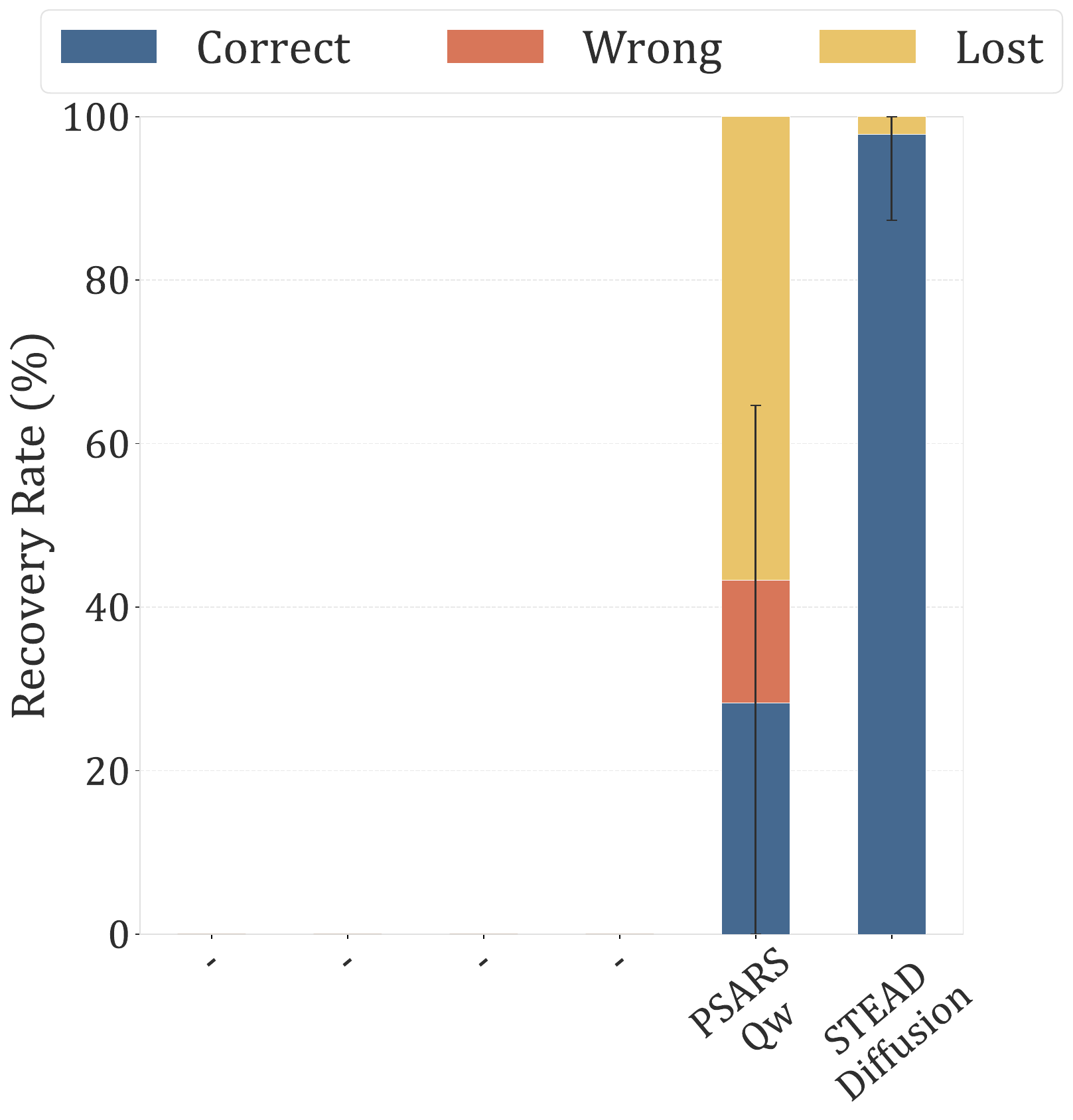}\vspace{-8pt}
        \caption{$\alpha=0.15$}
        \label{fig:a3}
    \end{subfigure}%
    \begin{subfigure}{.2\textwidth}
        \centering
        \includegraphics[height=1.05\textwidth]{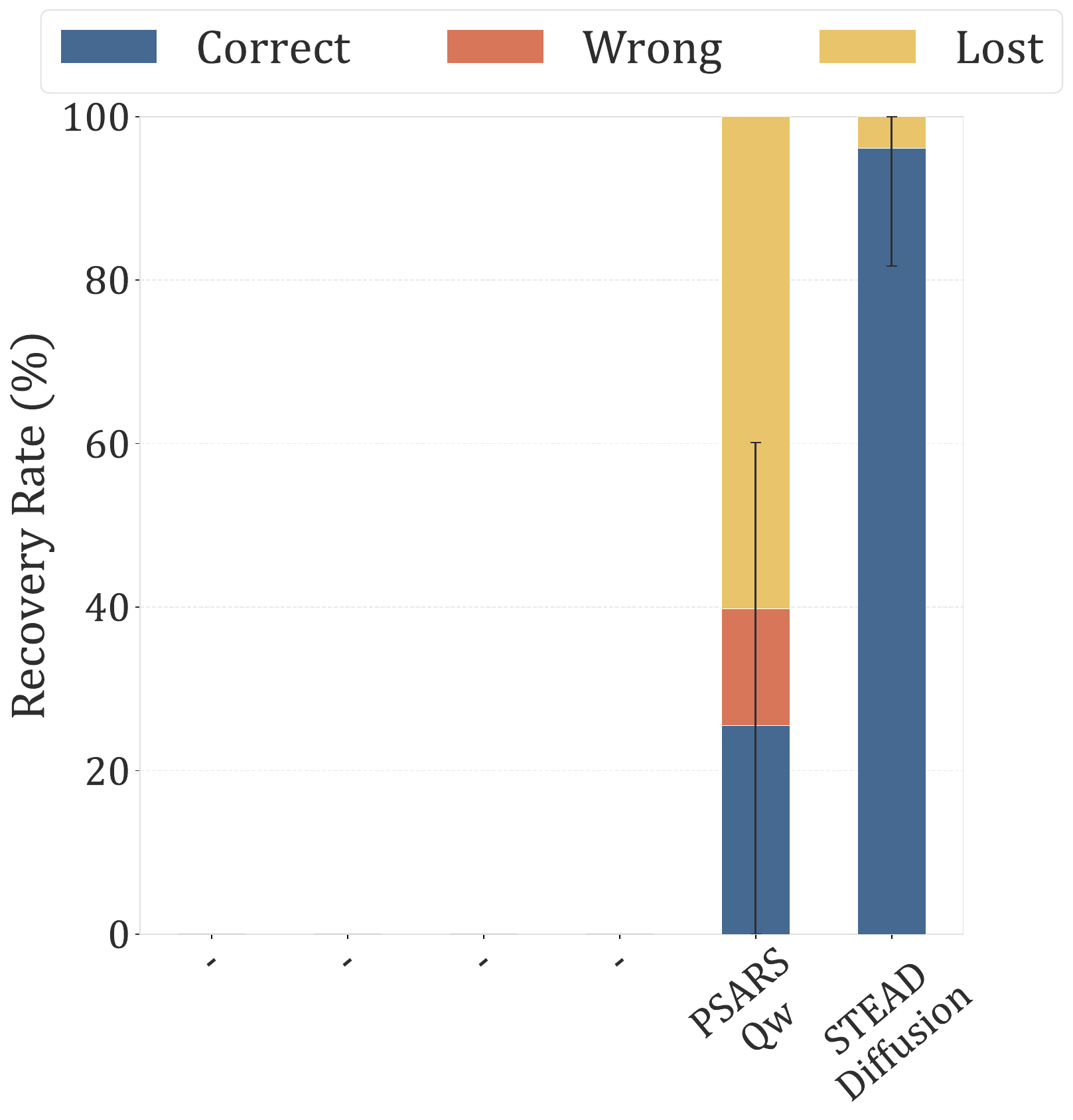}\vspace{-8pt}
        \caption{$\alpha=0.2$}
        \label{fig:a3}
    \end{subfigure}%
    \\
    \begin{subfigure}{.2\textwidth}
        \centering
        \includegraphics[height=1.05\textwidth]{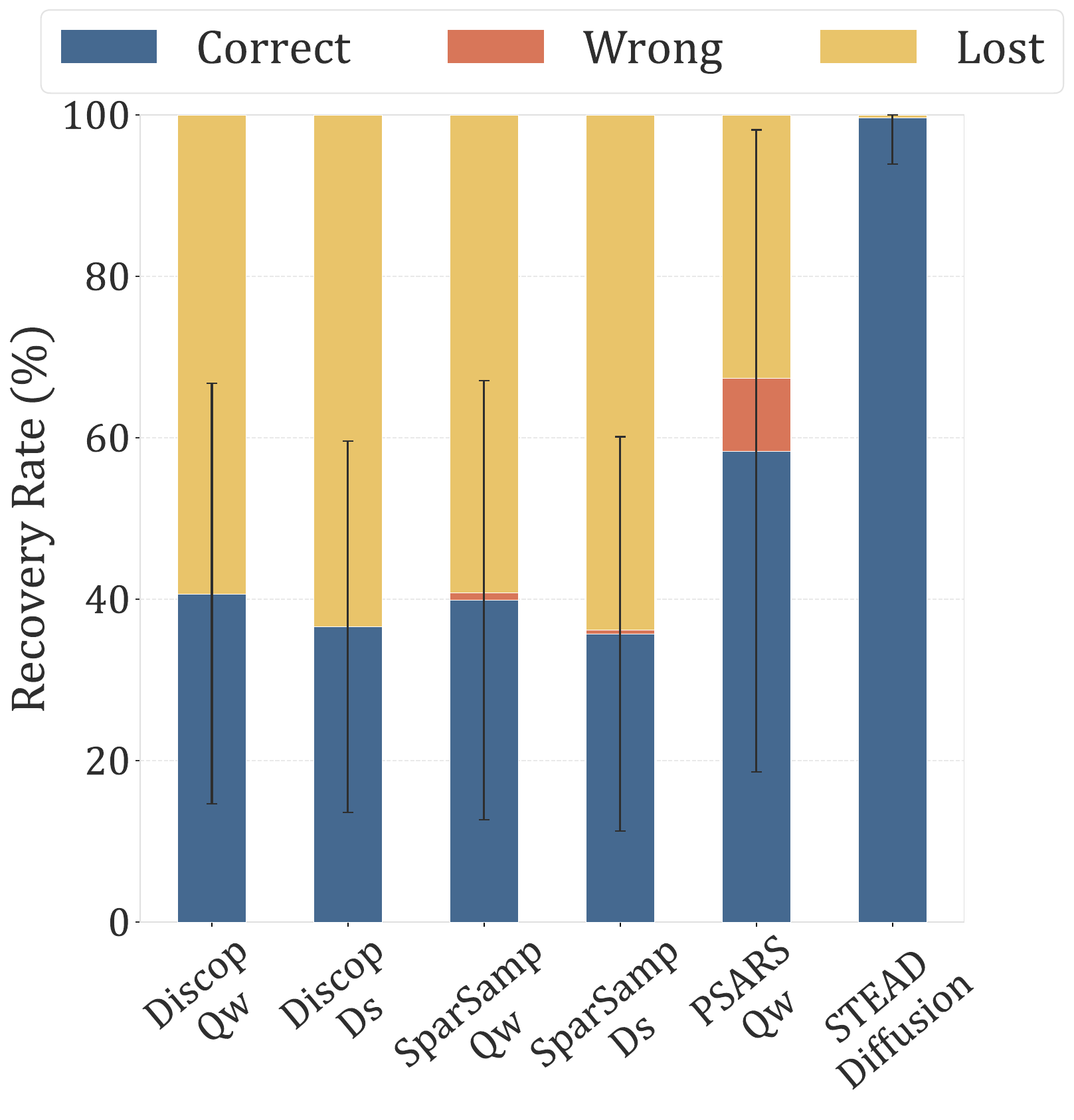}\vspace{-8pt}
        \caption{$\beta^*=2$}
        \label{fig:b1}
    \end{subfigure}%
    \begin{subfigure}{.2\textwidth}
        \centering
        \includegraphics[height=1.05\textwidth]{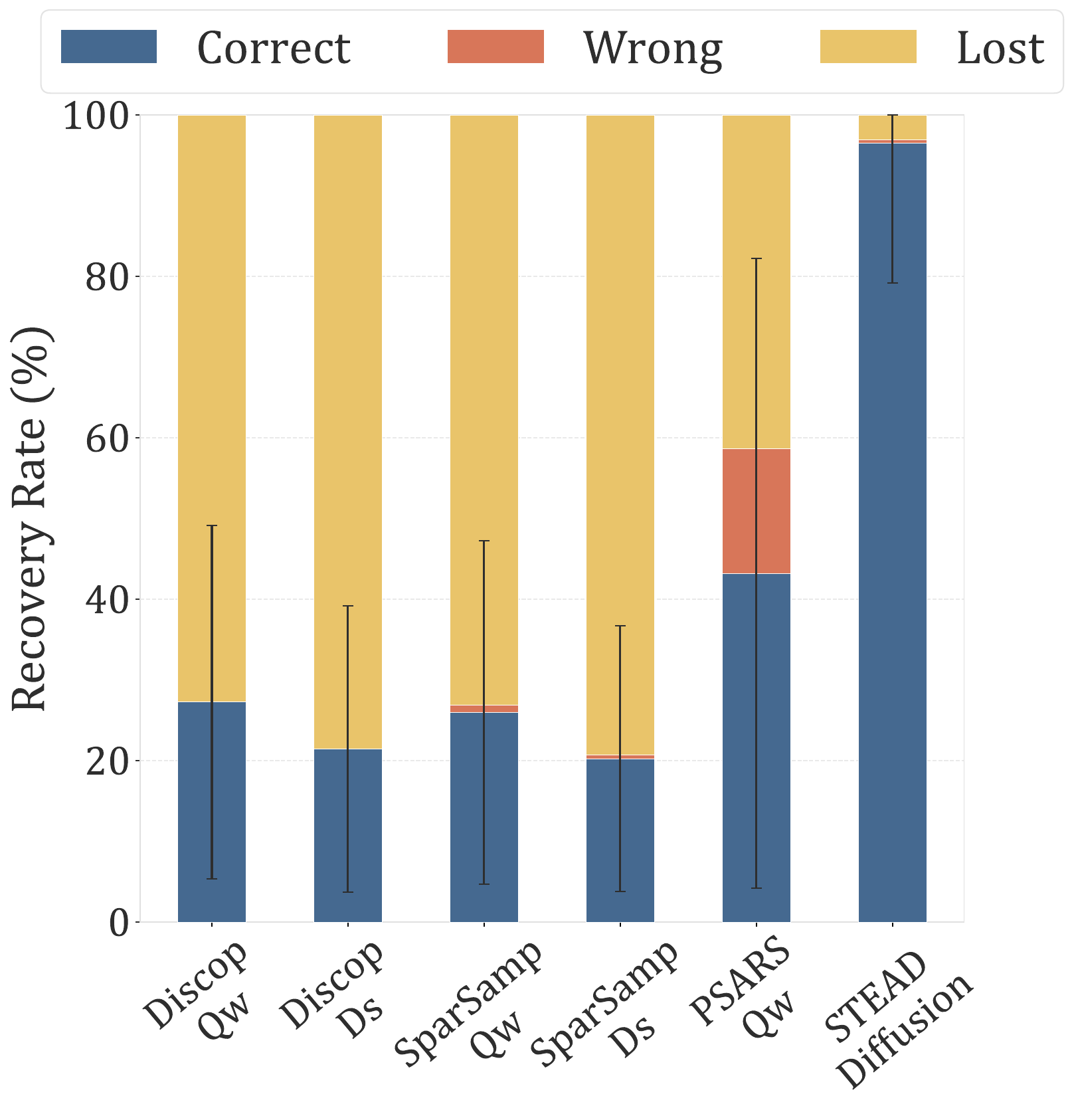}\vspace{-8pt}
        \caption{$\beta^*=4$}
        \label{fig:b2}
    \end{subfigure}%
    \begin{subfigure}{.2\textwidth}
        \centering
        \includegraphics[height=1.05\textwidth]{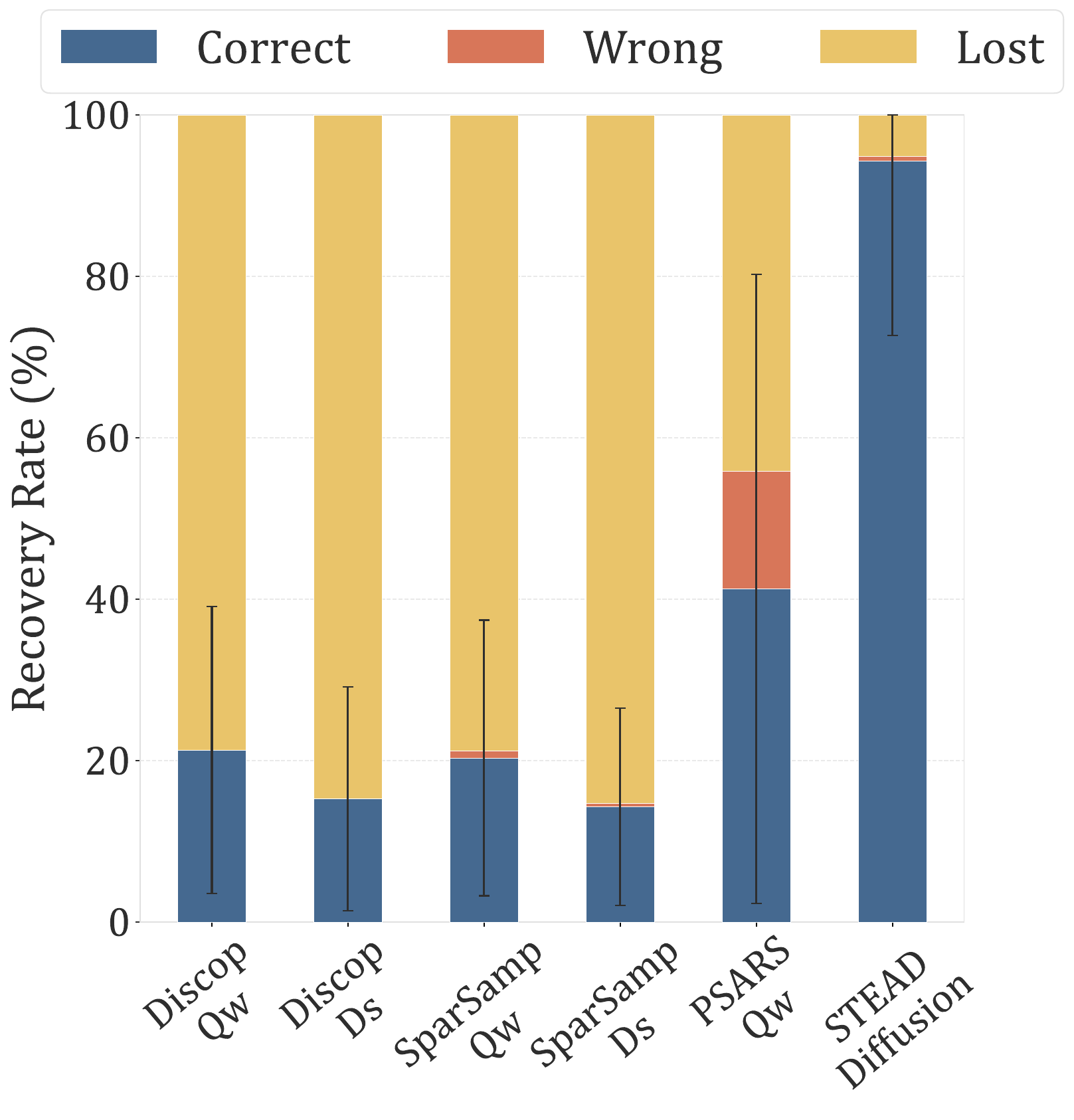}\vspace{-8pt}
        \caption{$\beta^*=6$}
        \label{fig:b3}
    \end{subfigure}%
    \begin{subfigure}{.2\textwidth}
        \centering
        \includegraphics[height=1.05\textwidth]{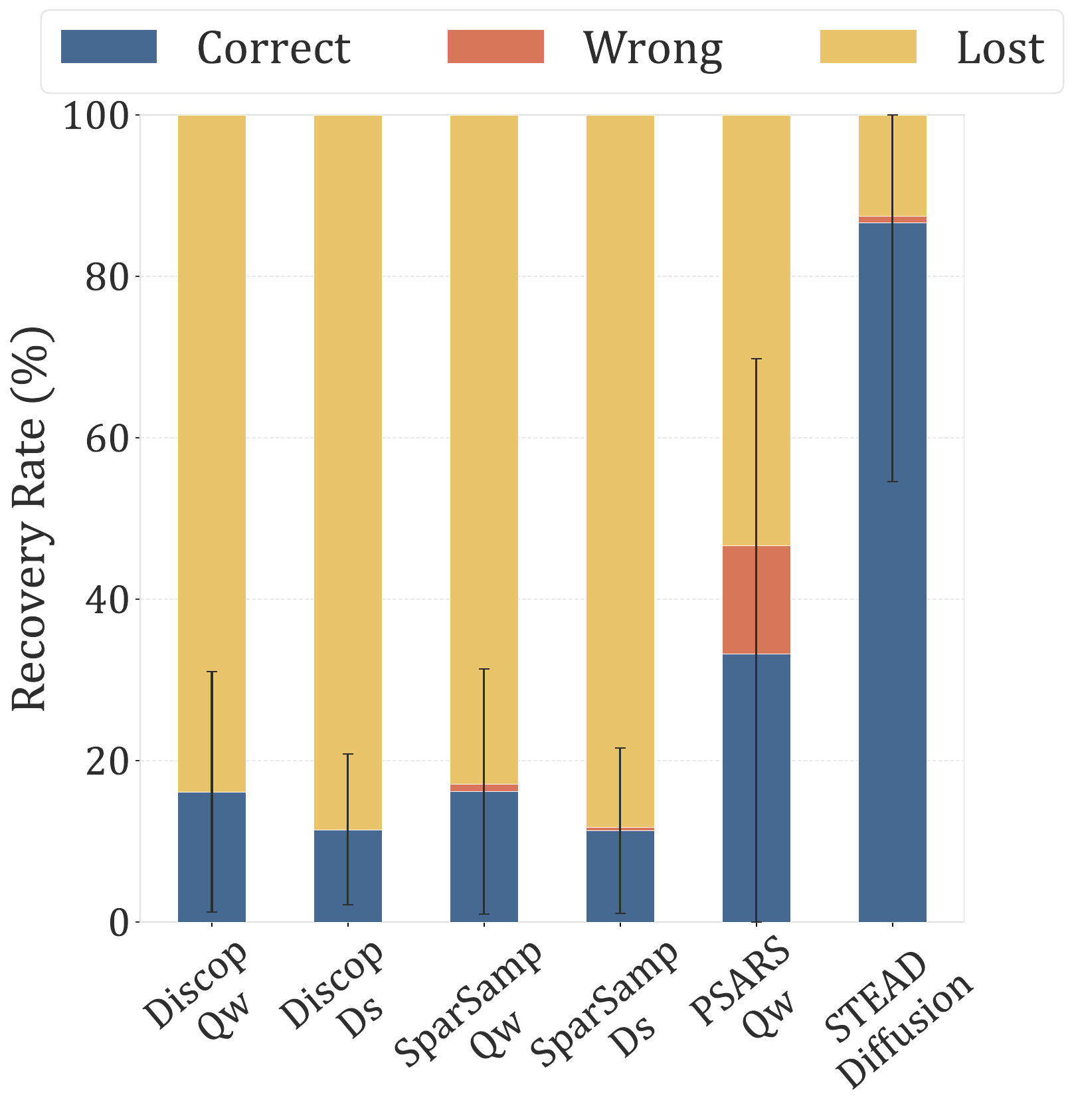}\vspace{-8pt}
        \caption{$\beta^*=8$}
        \label{fig:b4}
    \end{subfigure}%
    \begin{subfigure}{.2\textwidth}
        \centering
        \includegraphics[height=1.05\textwidth]{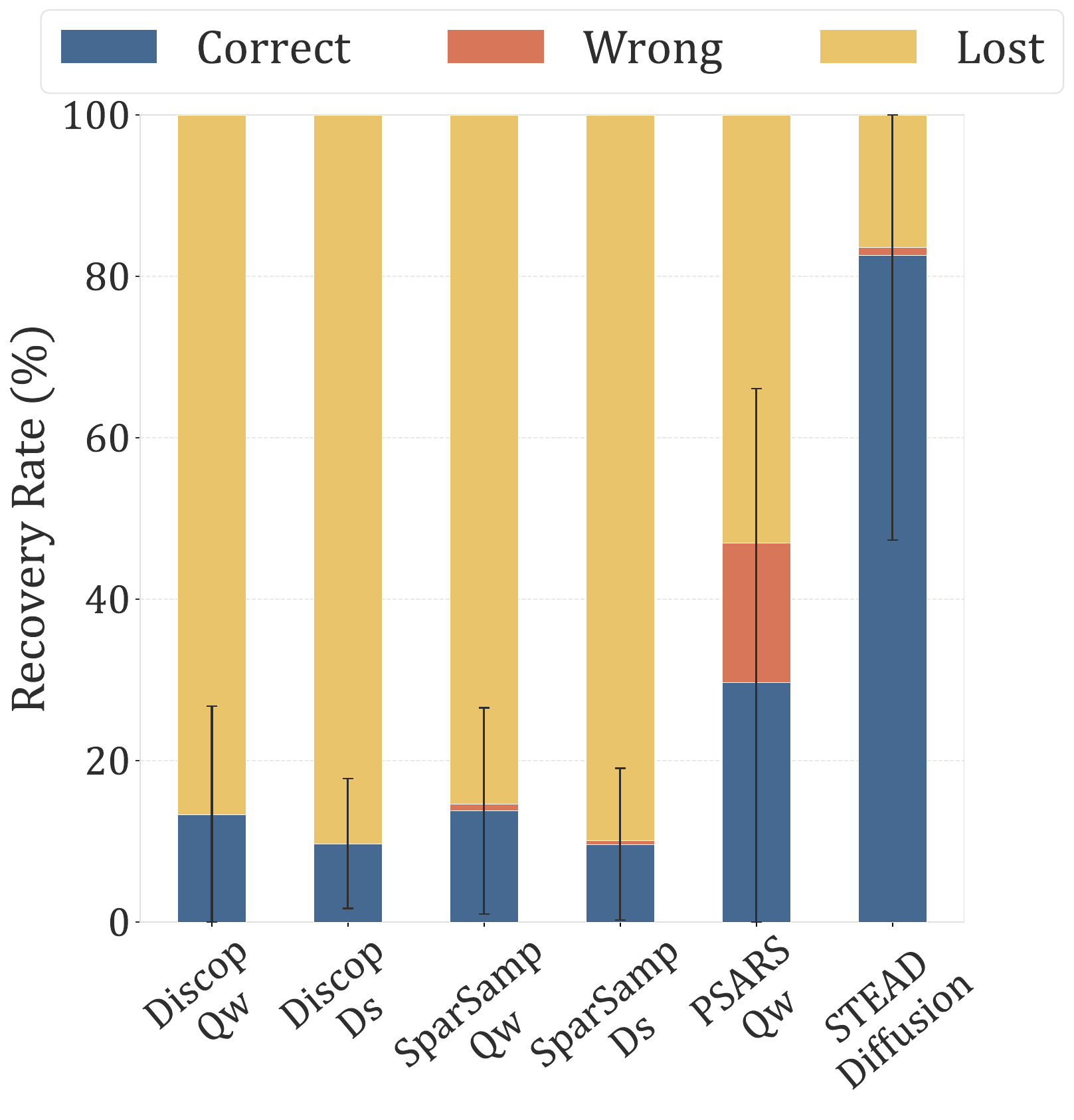}\vspace{-8pt}
        \caption{$\beta^*=10$}
        \label{fig:b5}
    \end{subfigure}%
    \\
    \begin{subfigure}{.2\textwidth}
        \centering
        \includegraphics[height=1.05\textwidth]{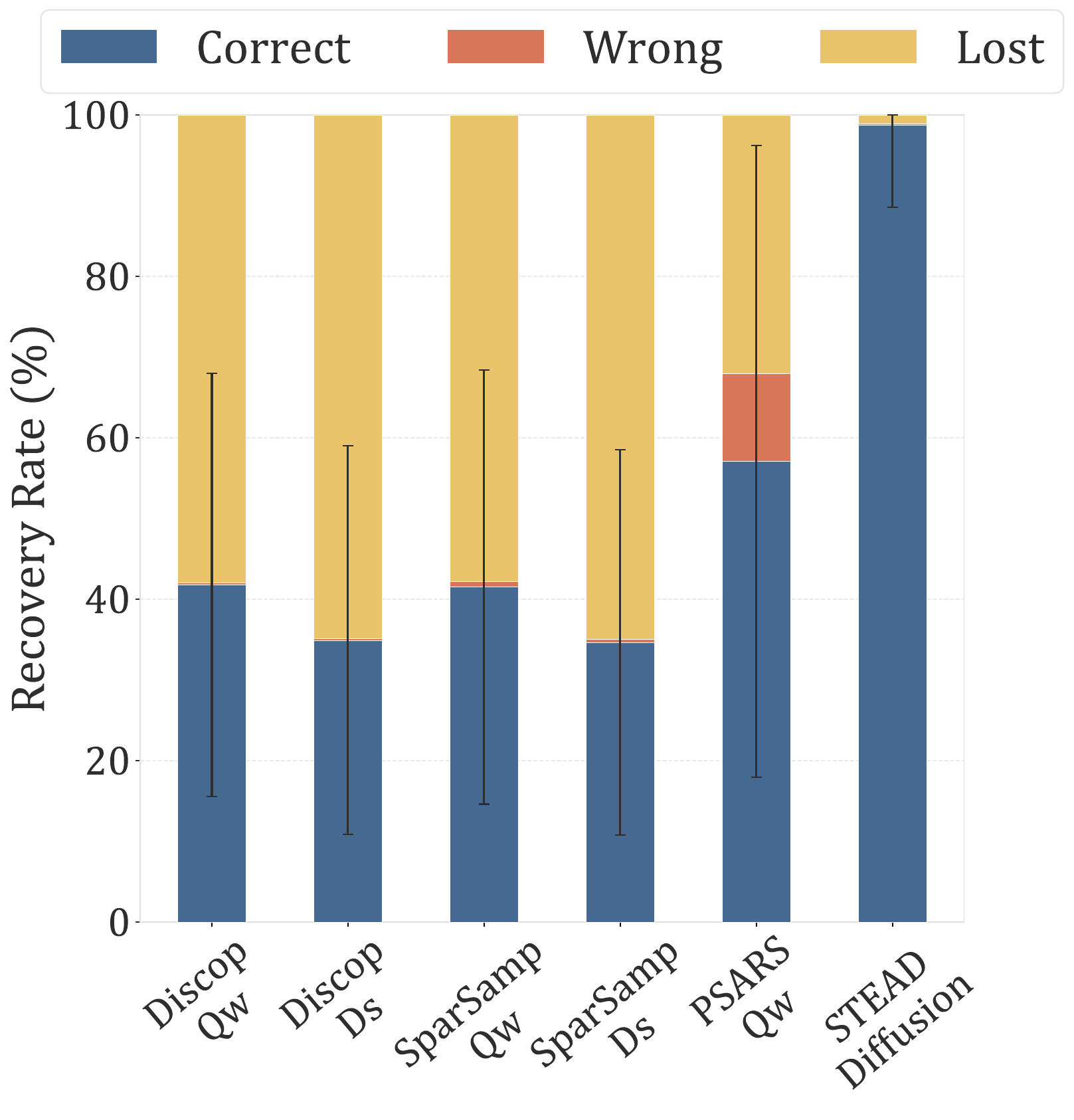}\vspace{-8pt}
        \caption{$\gamma^*=2$}
        \label{fig:c1}
    \end{subfigure}%
    \begin{subfigure}{.2\textwidth}
        \centering
        \includegraphics[height=1.05\textwidth]{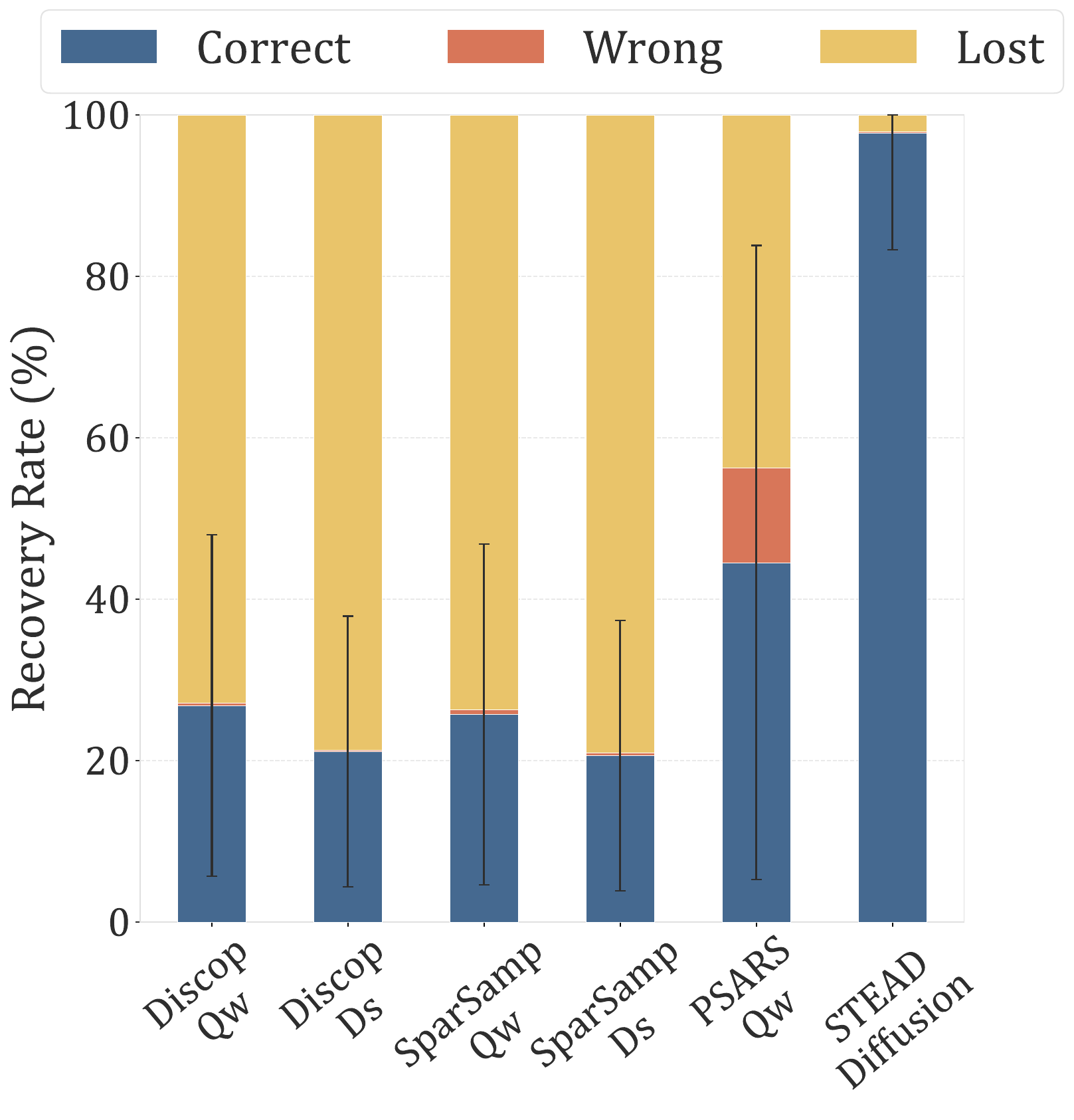}\vspace{-8pt}
        \caption{$\gamma^*=4$}
        \label{fig:c2}
    \end{subfigure}%
    \begin{subfigure}{.2\textwidth}
        \centering
        \includegraphics[height=1.05\textwidth]{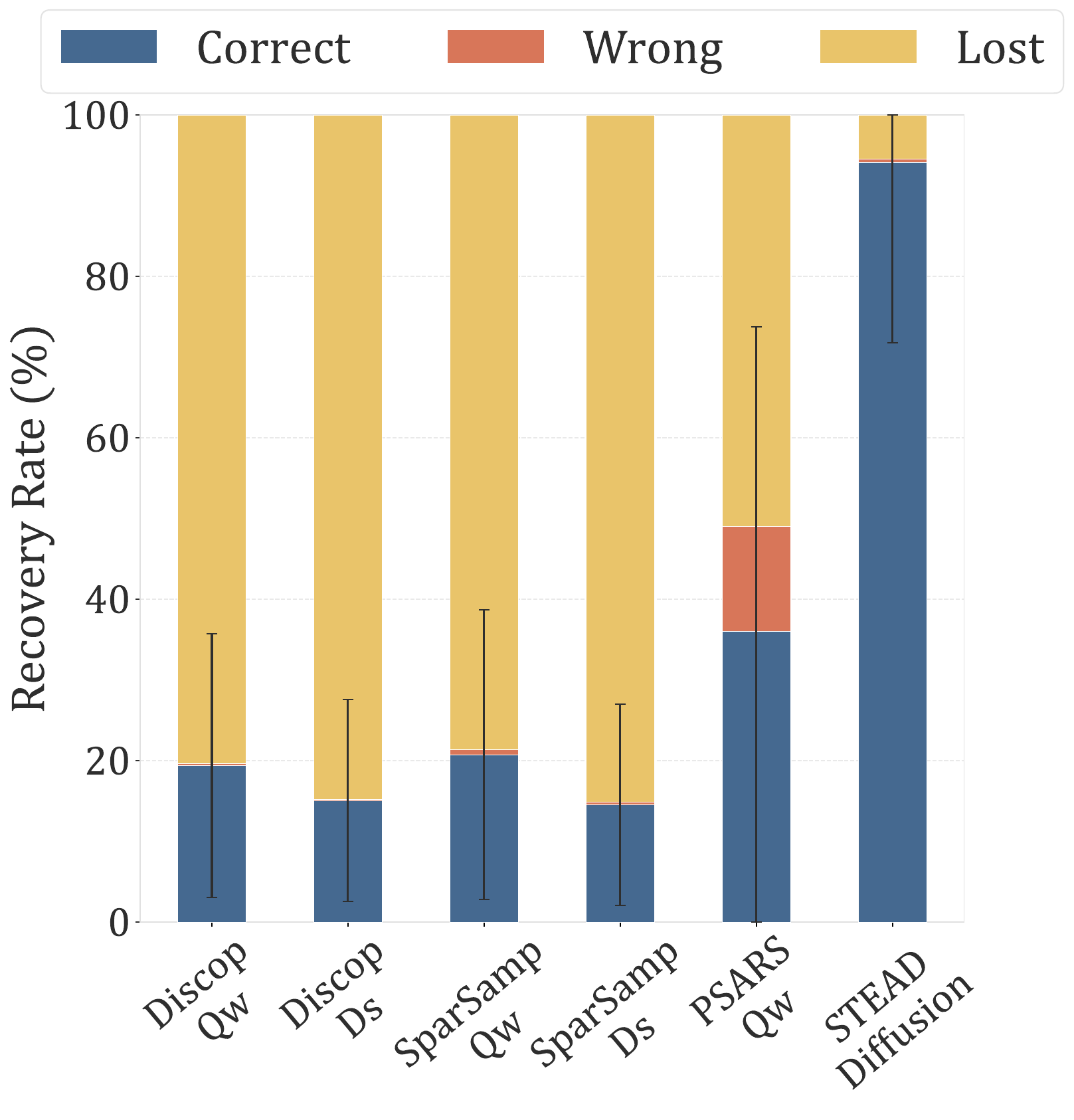}\vspace{-8pt}
        \caption{$\gamma^*=6$}
        \label{fig:c3}
    \end{subfigure}%
    \begin{subfigure}{.2\textwidth}
        \centering
        \includegraphics[height=1.05\textwidth]{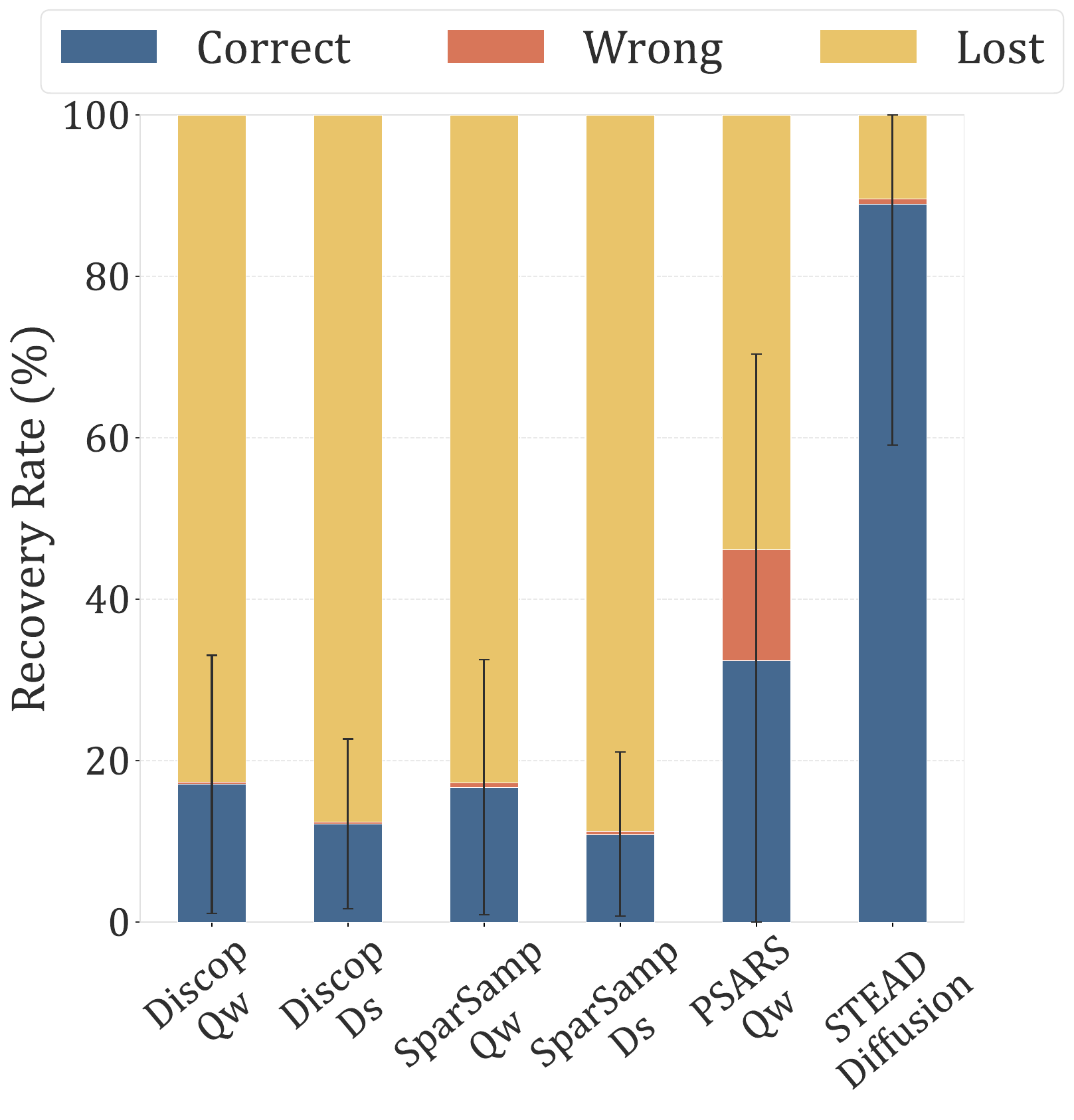}\vspace{-8pt}
        \caption{$\gamma^*=8$}
        \label{fig:c4}
    \end{subfigure}%
    \begin{subfigure}{.2\textwidth}
        \centering
        \includegraphics[height=1.05\textwidth]{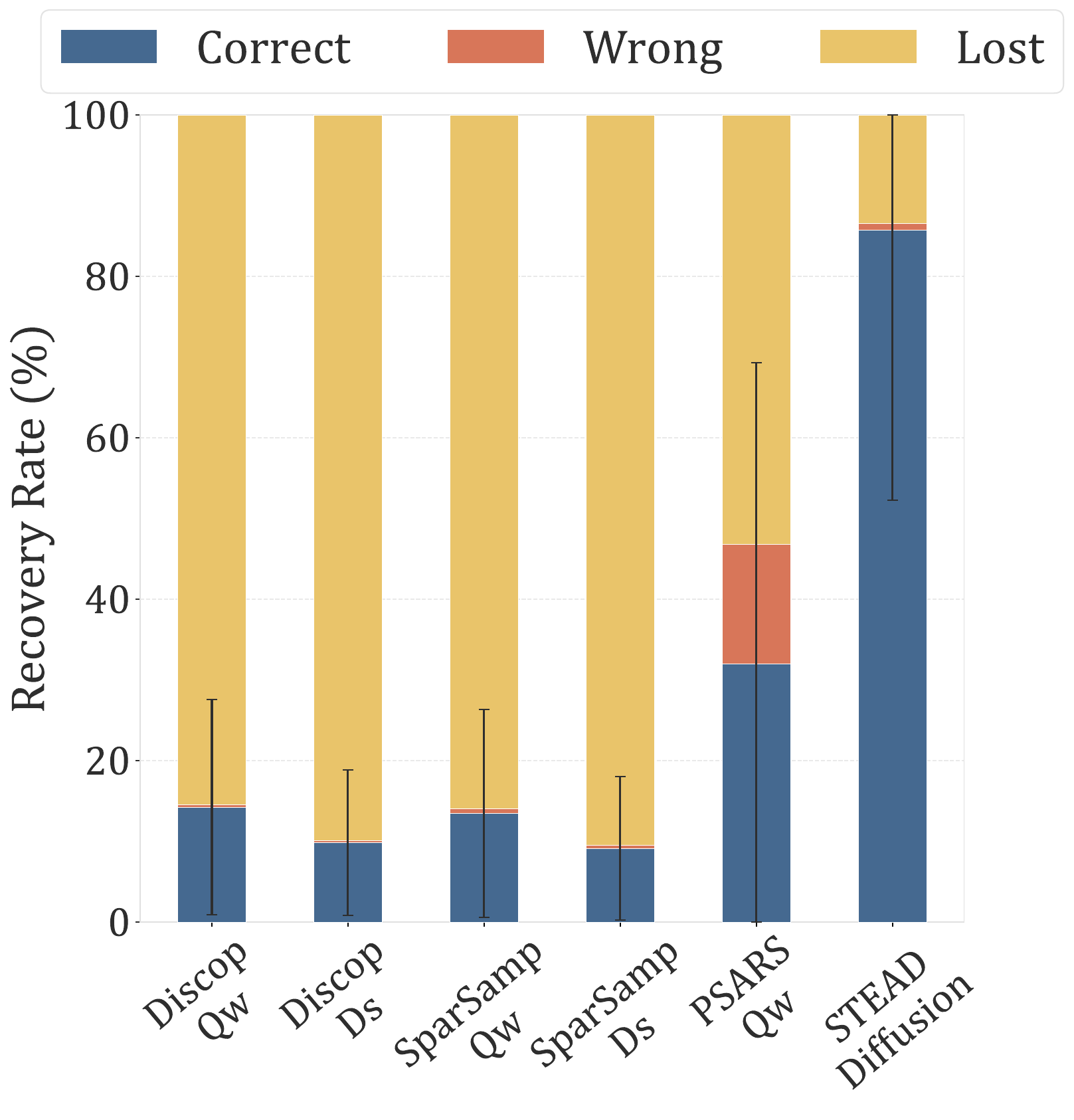}\vspace{-8pt}
        \caption{$\gamma^*=10$}
        \label{fig:c5}
    \end{subfigure}%
    \caption{Robustness against token substitution, insertion and deletion. \(\alpha\) represents the proportion of random substitutions, \(\beta^*\) and \(\gamma^*\) represent the number of random insertions and deletions. Error bars represent the standard deviation of the correct rate.}
    \label{fig:token-level}
\end{figure}

\vspace{-10pt}\paragraph{Capacity.}

In Table~\ref{tab:1}, we have calculated the effective embedding capacity of STEAD under different top-$p$ settings while also displaying the model's average entropy.
We compared the embedding capacity with the latest secure robust linguistic steganography method, PSARS~\cite{bai2025provably}, which is based on ARM. 
It can be seen that under the same sampling parameters, the steganographic embedding capacity of STEAD is still significantly higher than that of PSARS (whose secure parameter is set to 32 due to a trade-off between capacity and robustness).



\begin{figure}[b] 
    \centering 

    \begin{minipage}[]{0.35\textwidth} 
        \centering
        \includegraphics[width=\textwidth]{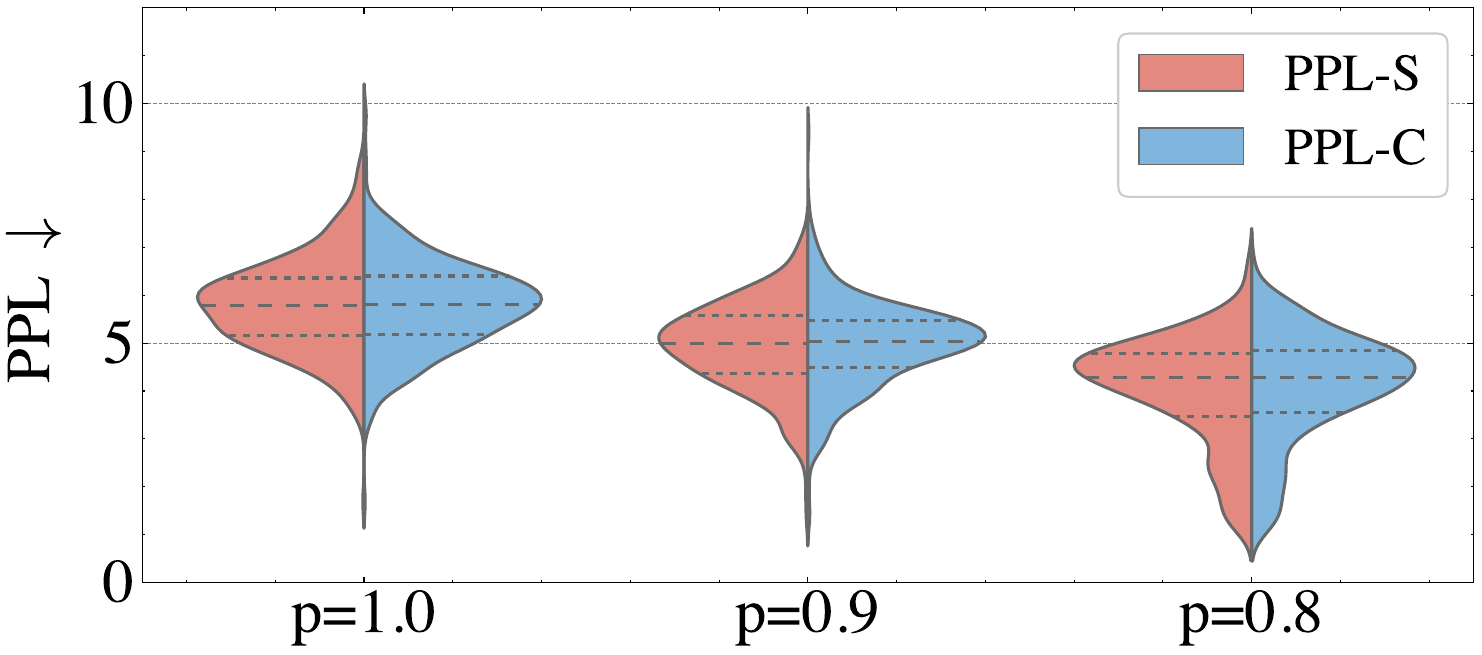} 
        \caption{Comparison of PPL.}
        \label{fig:ppl}
    \end{minipage}
    \hfill
    \begin{minipage}[]{0.6\textwidth}
        \centering
        \vspace{-5pt}
        \captionof{table}{Steganalysis results for STEAD}
        \label{tab:steganalysis}
        \resizebox{\textwidth}{!}{ 
            \begin{tabular}{c | c c c}
                \toprule
                \diagbox{Dataset}{Method} & FCN & R-BiLSTM-C & LSTMATT \\
                \midrule
                IMDB    &  50.67±0.82\%   & 49.00±1.59\%   & 49.50±1.27\%\\ \midrule
                C4      &  49.92±2.79\%   & 49.08±2.66\%   & 50.25±0.54\% \\
                \bottomrule
            \end{tabular}
        }
    \end{minipage}

\end{figure}

\vspace{-5pt}\paragraph{Linguistic quality.} Figure~\ref{fig:ppl} shows the average perplexity (PPL) values of stegotexts generated by STEAD and covertexts randomly sampled by the same DLM, and it can be seen that under different top-$p$ truncation settings, the PPL of the stegotexts remains consistent with the covertext. 

\vspace{-5pt}\paragraph{Statistical security.} 
We conducted steganalysis tests using various DNN-based steganalyzers. These tests are designed to distinguish between covertext generated from the DLM via random sampling and stegotext generated by STEAD.
We generated 1000 pairs of covertext and stegotext with $p = 0.9$ under two datasets. We adopted three steganalysis methods based on deep learning: FCN~\cite{yang2019fast}, R-BiLSTM-C~\cite{niu2019hybrid}, and LSTMATT~\cite{zou2020high}. Table~\ref{tab:steganalysis} shows that the detection error rate $P_E$ approaches $50\%$. This indicates that steganalysis methods cannot perform better than random guessing in detecting stegotext generated by STEAD, which demonstrates the security of our stegosystem.

\begin{wrapfigure}[11]{r}{0.4\textwidth}
    \centering
    \vspace{-12pt}
    \begin{subfigure}{.2\textwidth}
        \centering
        \includegraphics[height=1.05\textwidth]{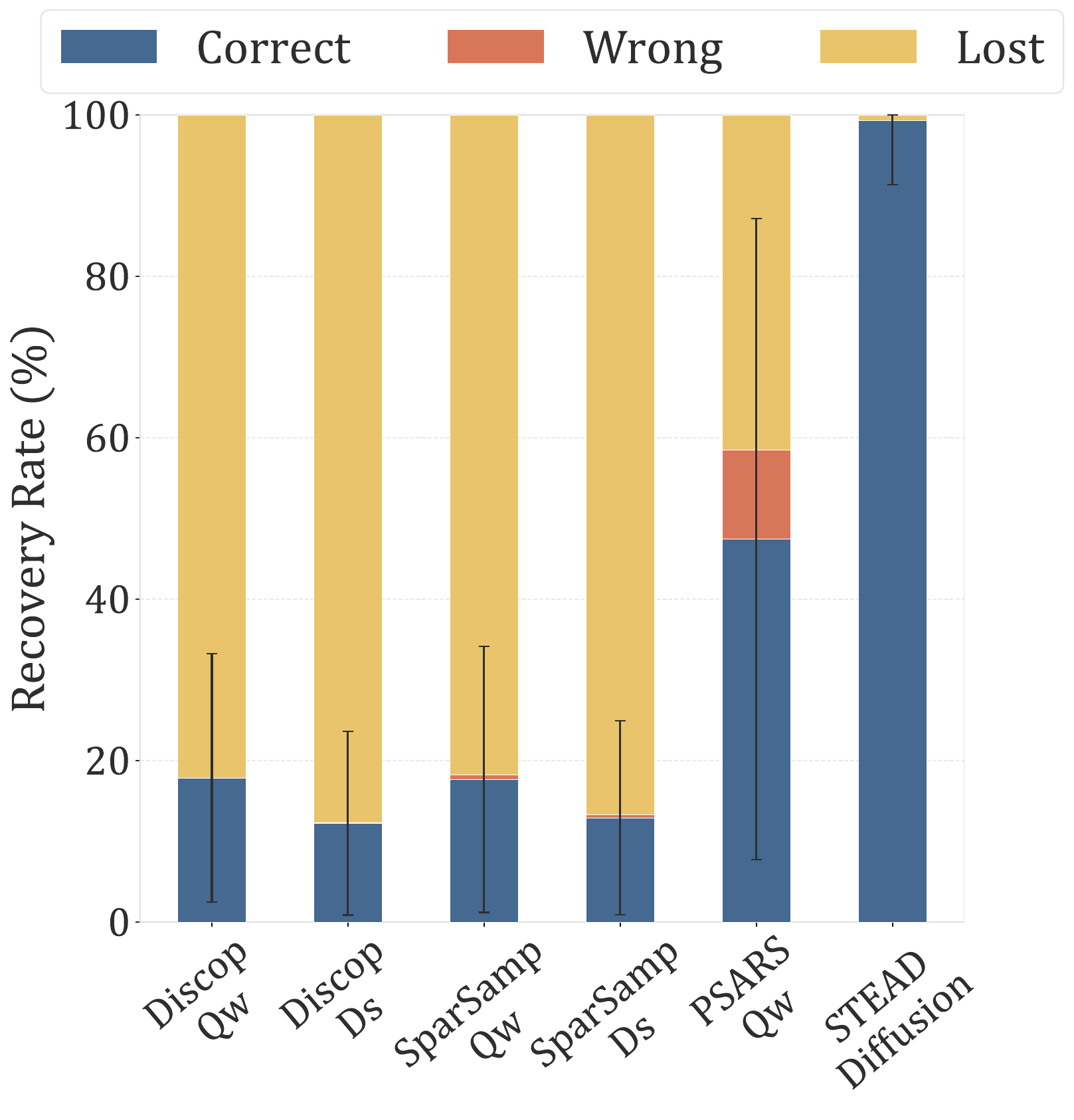}
        \caption{Weak}
        \label{fig:mix1} 
    \end{subfigure}%
    \begin{subfigure}{.2\textwidth}
        \centering
        \includegraphics[height=1.05\textwidth]{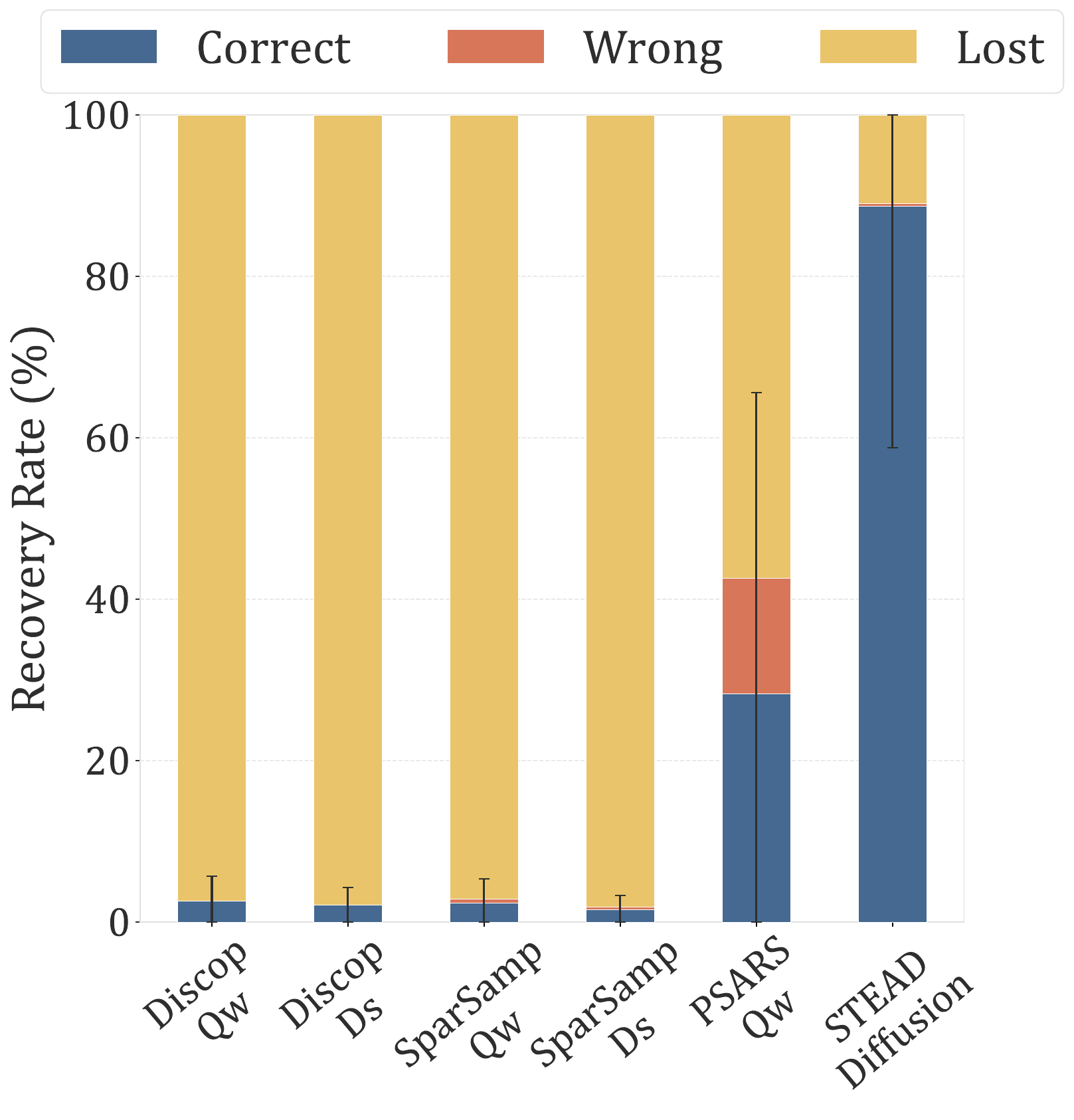}
        \caption{Strong}
        \label{fig:mix2} 
    \end{subfigure}%
    
    \caption{Robustness against mixed token attacks under two intensities.}
    \label{fig:mix} 
\end{wrapfigure}

\paragraph{Robustness against token-level attacks.}

We apply random token-level substitution, insertion, and deletion to stegotexts, each with various intensities. The results are shown in Figure~\ref{fig:token-level}. Then we define mixed token-level attacks as attacks that simultaneously apply token-level substitution, insertion, and deletion. We define two attack intensities, weak and strong, using the parameter sets $(\alpha=0.01, \beta^*=1, \gamma^*=1)$ and $(\alpha=0.1, \beta^*=3, \gamma^*=3)$, respectively, as shown in Figure~\ref{fig:mix}. It can be seen that at the token level, STEAD is more resistant to various attacks than the comparison methods, whether against a single type of attack or mixed attacks.


\begin{figure}[htbp]
    \centering
    \begin{subfigure}{.2\textwidth}
        \centering
        \includegraphics[height=1.05\textwidth]{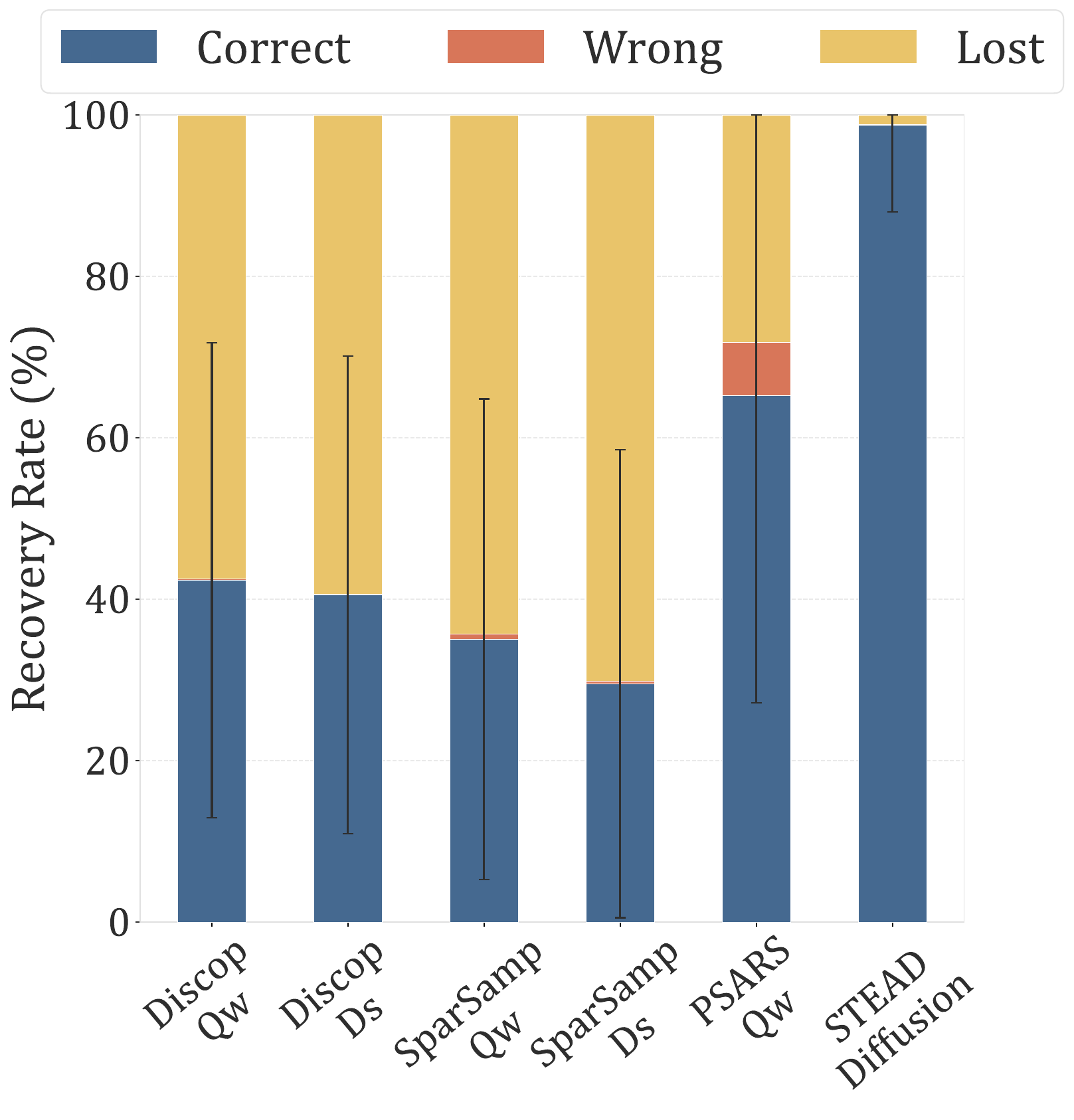}
        \caption{$0.01$}
        \label{fig:semantic_01} 
    \end{subfigure}%
    \begin{subfigure}{.2\textwidth}
        \centering
        \includegraphics[height=1.05\textwidth]{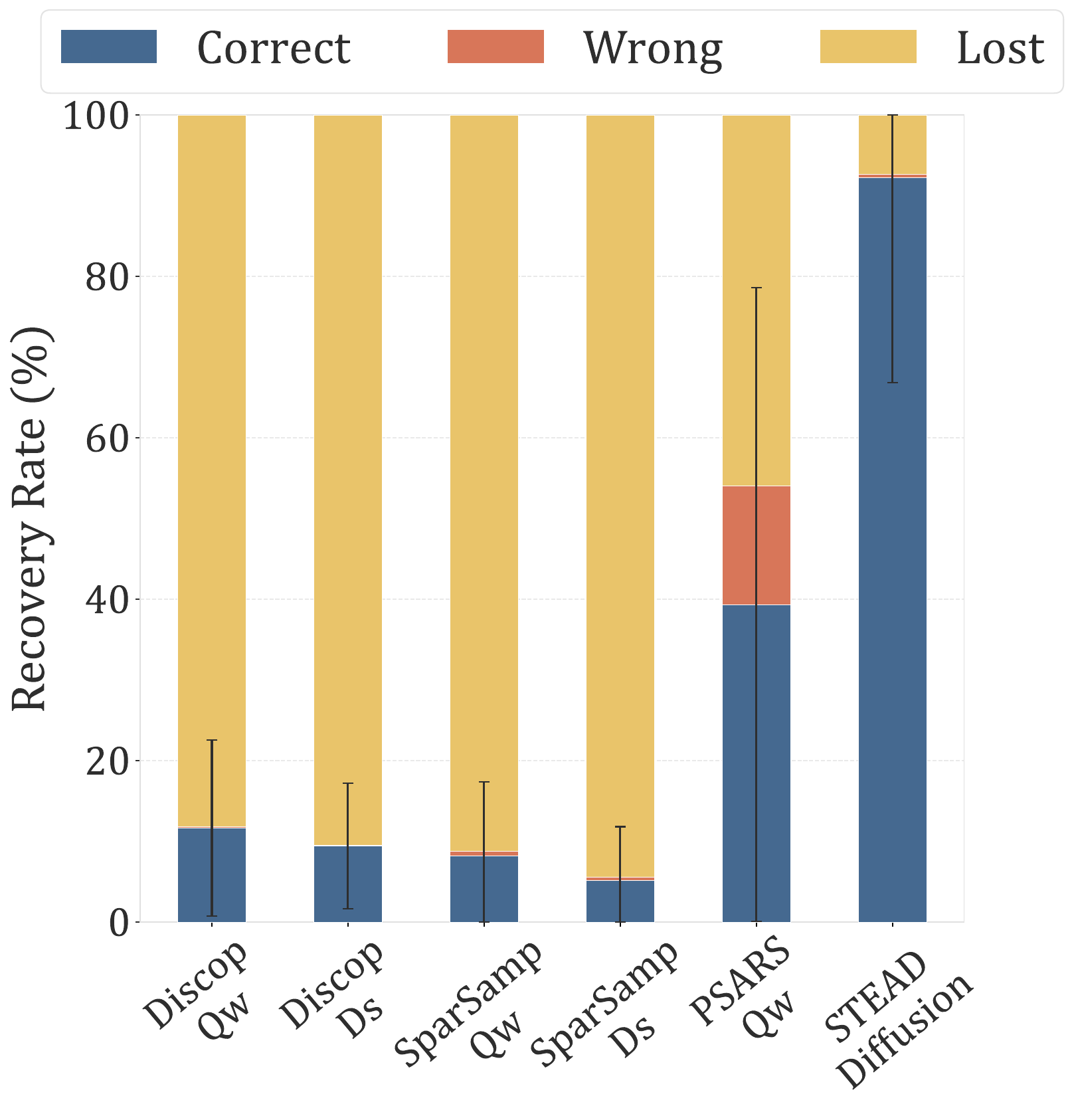}
        \caption{$0.05$}
        \label{fig:semantic_05} 
    \end{subfigure}%
    \begin{subfigure}{.2\textwidth}
        \centering
        \includegraphics[height=1.05\textwidth]{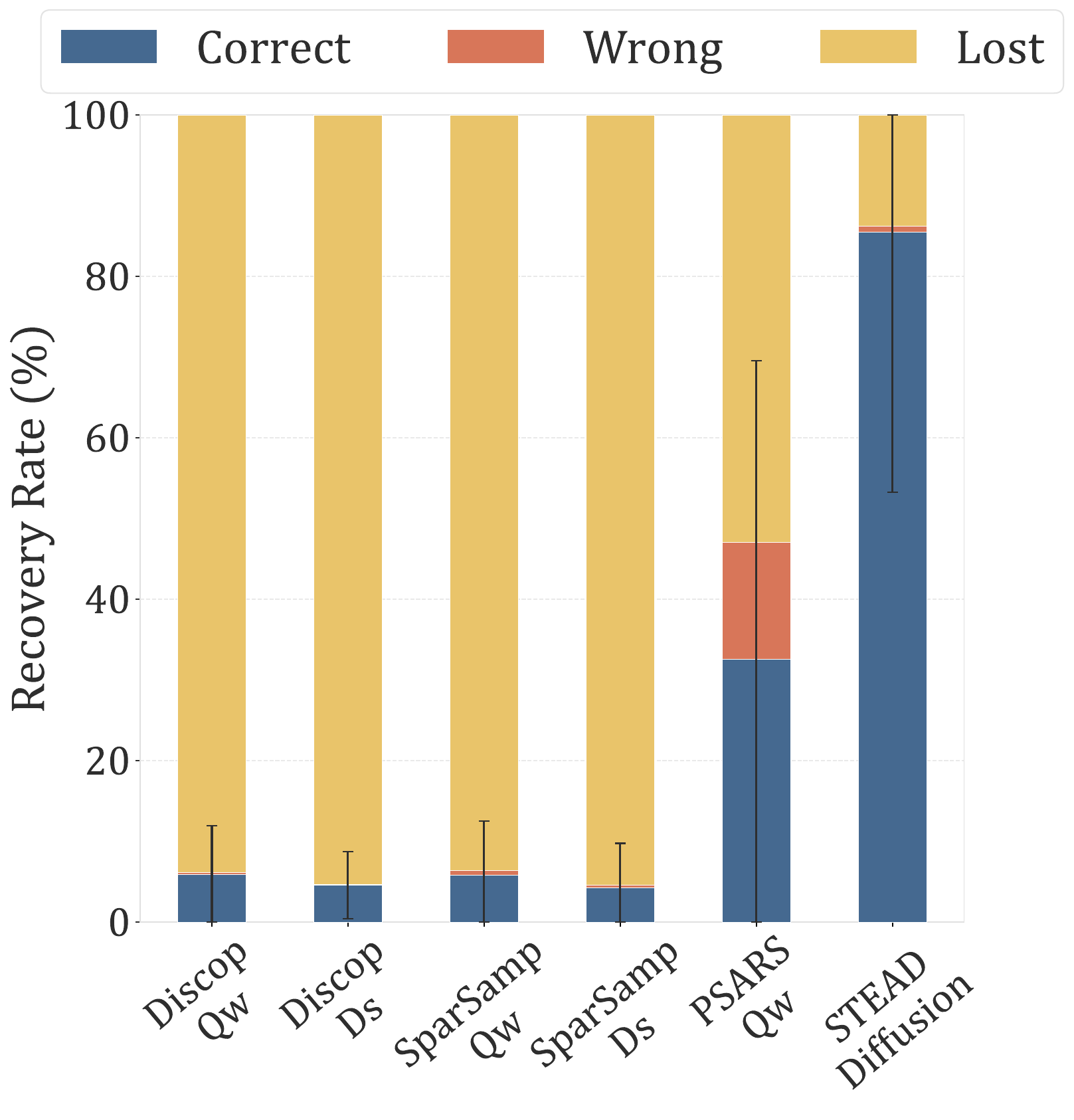}
        \caption{$0.1$}
        \label{fig:semantic_10} 
    \end{subfigure}%
    \begin{subfigure}{.2\textwidth}
        \centering
        \includegraphics[height=1.05\textwidth]{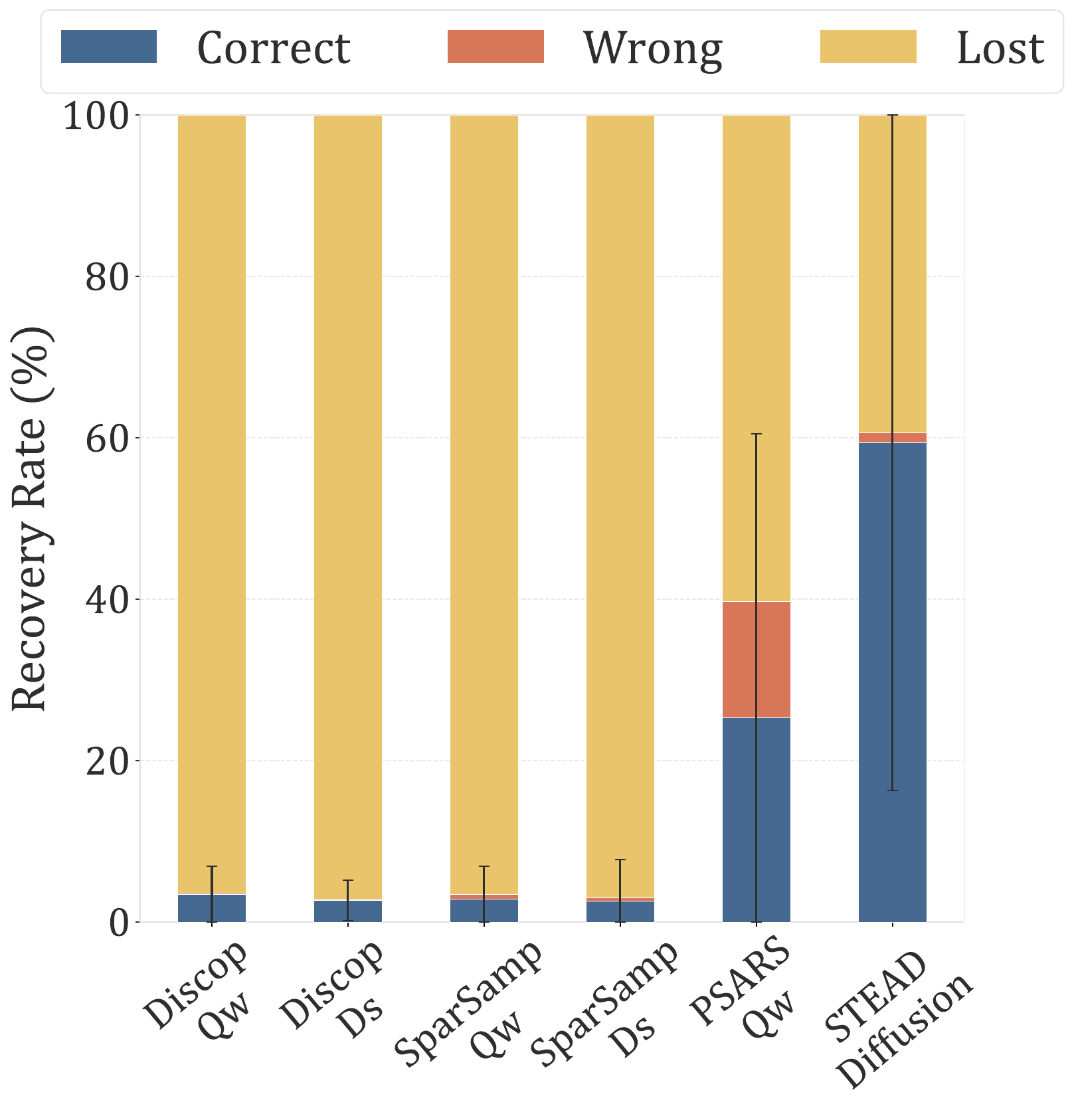}
        \caption{$0.2$}
        \label{fig:semantic_10} 
    \end{subfigure}%
    \begin{subfigure}{.2\textwidth}
        \centering
        \includegraphics[height=1.05\textwidth]{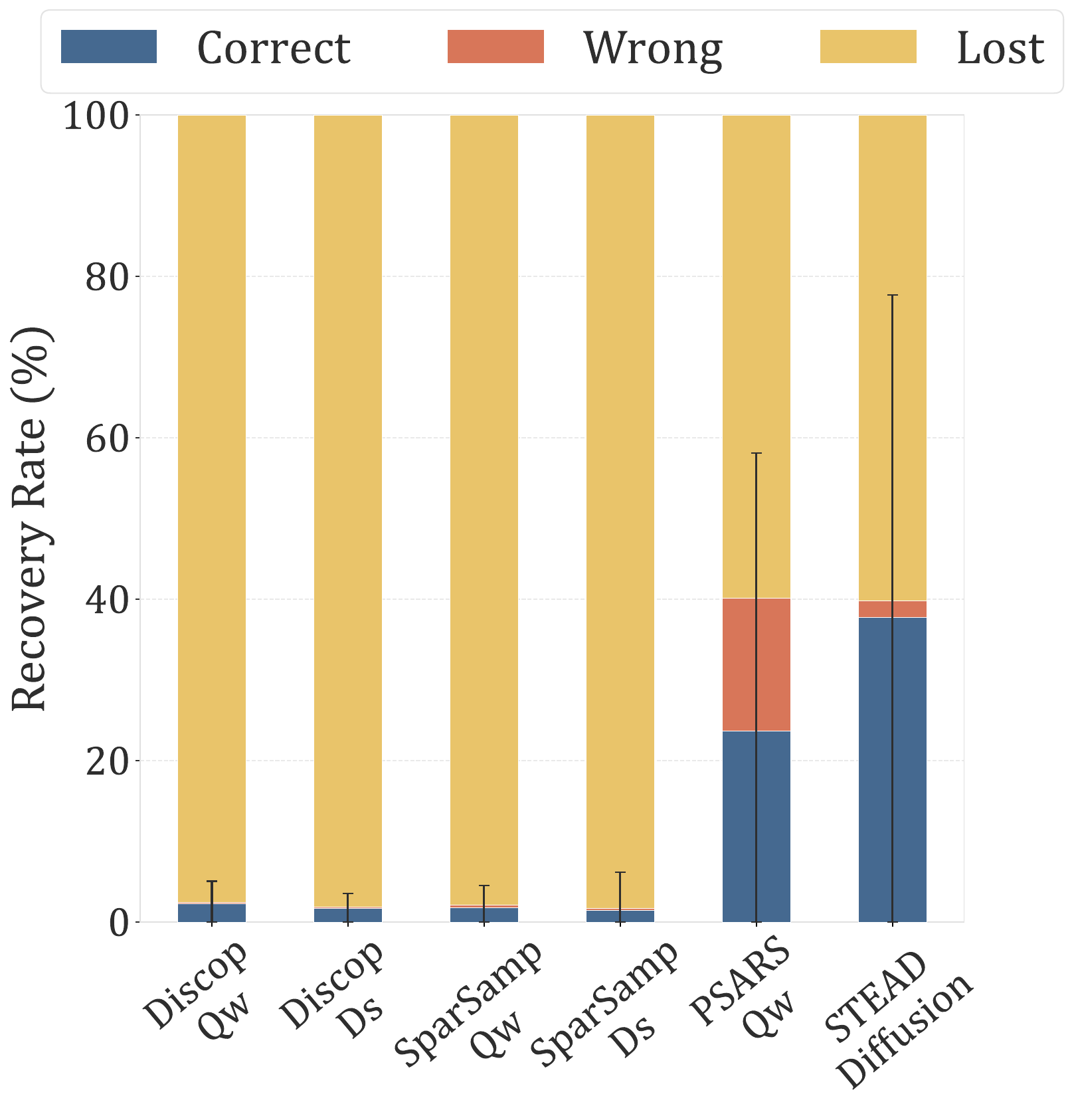}
        \caption{$0.3$}
        \label{fig:semantic_10} 
    \end{subfigure}%
    \caption{Robustness against semantic attacks. From left to right, substitute the synonyms of words (not tokens) in the text with proportions of 0.01, 0.05, 0.1, 0.2, and 0.3 respectively.}
    \label{fig:real_attacks} 
\end{figure}

\vspace{-15pt}\paragraph{Robustness against realistic attack scenarios.}

Figure~\ref{fig:real_attacks} shows the robustness evaluation to more challenging scenarios. 
There is a semantic synonym substitution attack at word-level based on TextAttack~\cite{morris2020textattack}.
With a word substitution rate of 0.1, non-robust methods are rendered almost entirely ineffective, whereas STEAD sustains an extraction correction rate above 80\%.

\begin{wrapfigure}[]{r}{0.54\textwidth}
    \centering
    \vspace{-10pt}
    \begin{subfigure}{.18\textwidth}
        \centering
        \includegraphics[height=1.05\textwidth]{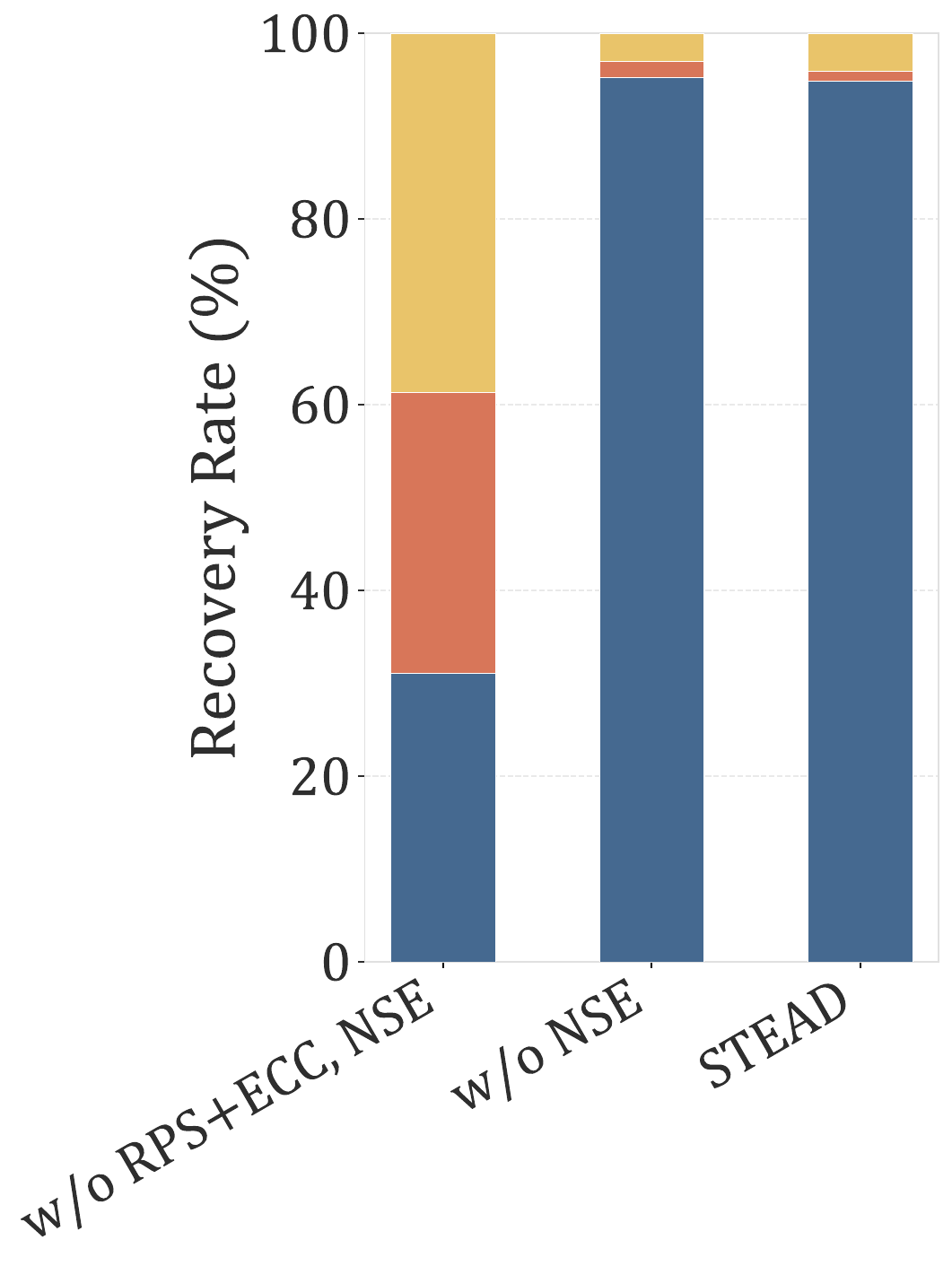}\vspace{-10pt}
        \caption{$\alpha=0.2$}
        \label{fig:aa} 
    \end{subfigure}%
    \begin{subfigure}{.18\textwidth}
        \centering
        \includegraphics[height=1.05\textwidth]{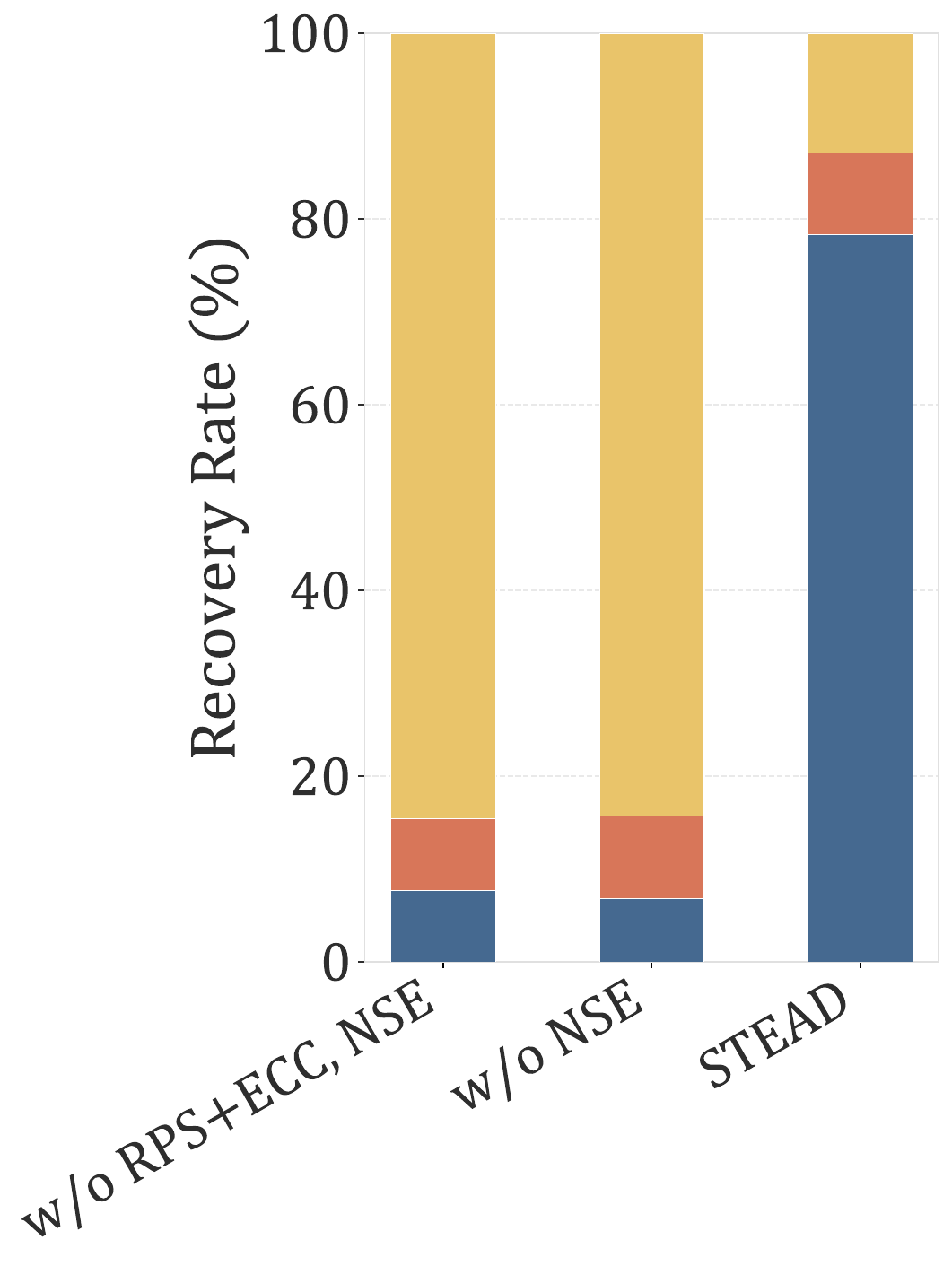}\vspace{-10pt}
        \caption{$\beta^*=10$}
        \label{fig:ab} 
    \end{subfigure}%
    \begin{subfigure}{.18\textwidth}
        \centering
        \includegraphics[height=1.05\textwidth]{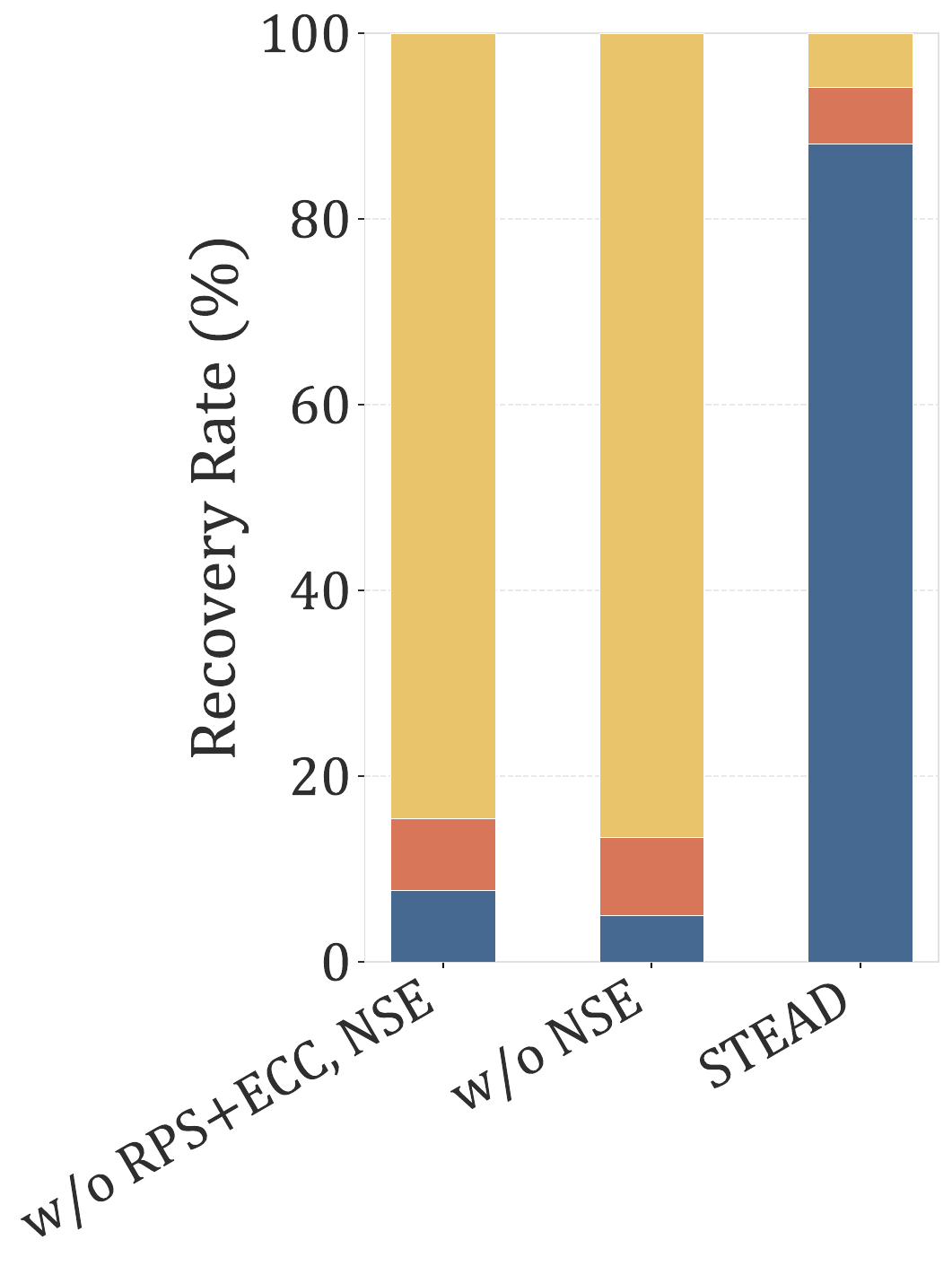}\vspace{-10pt}
        \caption{$\gamma^*=10$}
        \label{fig:semantic_01} 
    \end{subfigure}%
    \caption{Ablation study.}
    \label{fig:ablation} 
\end{wrapfigure}

\paragraph{Ablation study.}  Our method comprises three key components: a message-driven PRN sampling algorithm, robust position embedding with error correction coding (RPE+ECC), and a neighborhood search extraction (NSE) strategy. The ablation study for these components is presented in Figure~\ref{fig:ablation}. We evaluated robustness against token substitution (using $\alpha=0.2$), token insertion (with $\beta^*=10$), and token deletion (with $\gamma^*=10$). The baseline steganographic algorithm (i.e., without RPE, ECC, and NSE) performs embedding and extraction directly on the DLM. Its robustness is comparable to that of the ARM-based method, indicating that the advantages of STEAD are not derived solely from the DLM itself. The integration of RPE and ECC confers resistance to substitution by leveraging the DLM's parallel characteristic at specific, limited positions—a key innovation of STEAD. Furthermore, the inclusion of NSE enables STEAD to handle insertion and deletion.

\section{Conclusion}\label{sec:conclusion}

We propose STEAD, a novel, robust, and provably secure linguistic steganography method utilizing a Diffusion Language Model (DLM). STEAD employs the random synchronous sampling characteristic of the DLM, combined with error correction codes and a neighbor search extraction strategy, to achieve robustness against substitution, insertion, deletion, and token ambiguity. We demonstrate the performance of STEAD through theoretical proofs and extensive experiments. Results indicate that STEAD exhibits a higher embedding capacity and enhanced robustness over comparable methods.

\clearpage

\section*{Acknolegment}

This work was supported in part by the National Natural Science Foundation of China under Grant U2336206, Grant 62472398, Grant U2436601, and Grant 62402469. 
We are grateful to Bai et al., the authors of~\cite{bai2025provably}, for generously providing the code associated with their work and offering valuable guidance.

\bibliographystyle{IEEEtran}
{\small
\bibliography{references.bib}}


\appendix





\section{Algorithms of STEAD}\label{app:A}

\subsection{Message-Driven Pseudo-Random Number Sampling}
See the Algorithm~\ref{algo:embed}.



\begin{algorithm}
\caption{$\textsc{embed}(P, r, \mathbf{m})$: an message embedding algorithm based on message-driven pseudo-random number sampling}\label{algo:embed}
\begin{algorithmic}[1]
\Statex \textbf{Input:} A distribution $P$, a pseudo-random number $r$, optional $\ell$-bit message $\mathbf{m}$

\State $cumul \leftarrow 0$ 
\If{$\ell > 0$}
    \State $r \leftarrow \left[ r + \frac{\text{dec}(\mathbf{m})}{2^{\ell}} \right] \mod 1$
\EndIf
\For{$k \in \{0,1,\dots,|P|\}$}
    \State $ cumul \leftarrow cumul+P(k)$
    \If{$cumul > r$}
        \State $x \leftarrow \text{token corresponding to } P(k)$
        \State \textbf{break}
    \EndIf
\EndFor
\Statex \textbf{Output:} sampled token $v$
\end{algorithmic}
\end{algorithm}

\begin{algorithm}
\caption{$\textsc{extract}(P, r, x)$: an message extracting algorithm based on message-driven pseudo-random number sampling}\label{algo:extract}
\begin{algorithmic}[1]
\Statex \textbf{Input:} A distribution $P$, a pseudo-random number $r$, a sampled token $x$
\State $cumul,r_{left},r_{right} \leftarrow 0$ 
\For{$k \in \{0,1,\dots,|P|\}$}
    \State $ cumul \leftarrow cumul+P(k)$
    \If{x\text{ corresponds to }P(k)}
        \State $ r_{left} \leftarrow cumul-P(k)$
        \State $ r_{right} \leftarrow cumul$
        \State \textbf{break}
    \EndIf
\EndFor

\For{$m\in\{0,1,\dots,2^\ell\}$}
    \If{$r_{left}<\left[ r + \frac{\text{dec}(\mathbf{m})}{2^{\ell}} \right] \mod 1<r_{right}$}
        \State $\mathbf{m} \leftarrow \text{bin}(m)$\Comment{Succeed to extract}
        \State \textbf{return} $\mathbf{m}$
    \EndIf
\EndFor
\State \textbf{return} $\mathbf{m} \leftarrow$``x'' \Comment{Fail to extract}
\Statex \textbf{Output:} Extracted message $\mathbf{m}$
\end{algorithmic}
\end{algorithm}



\subsection{The Main Process of Message Extraction}\label{app:neighbor}

In this section, we describe the main process of message extraction for the receiver in the STEAD stegosystem. Also, we give in detail the error correction and $\mu$-neighborhood search strategy.  

For each stego, the receiver initializes an offset vector $\mathbf{o}^{1:L}$ with all zeros, $\mathbf{o}_{t=1}^{1:L}=0$.  
During the extraction process at each time step $t\rightarrow s$, the receiver first checks the denoising positions $\{j_1, \dots, j_{N_s}\}$ by $\mathbf{r}_{\mathbf{M}}$.
Then the receiver finds the corresponding stegotokens $\mathbf{st}^{j_1, \dots,j_{N_s}}$.
Calculate the corresponding stego capacity $\ell$ at the denoising positions by Formula (\ref{eq:capacity}) based on the predicted distribution $p_{\theta}(\mathbf{x}_s^j|\mathbf{x}_t)$. If \( \ell = 0 \), the receiver can directly check whether the stegotoken $\mathbf{st}^j$ at each denoising position $j$ has been tampered with using the PRN $\mathbf{r}_s^{j}$. If tampered, the corrected token can be obtained by Formula (\ref{eq:sample}). If \( \ell > 0 \), for each denoising position $j$, the receiver attempts to extract the secret message $\mathbf{m}_s$ using the distribution $p_{\theta}(\mathbf{x}_s^j|\mathbf{x}_t)$, the corresponding stegotoken $\mathbf{st}^j$, and the PRN $\mathbf{r}_s^{j}$ using Algorithm~\ref{algo:extract}. Since the secret messages embedded at all denoising positions of step $t\rightarrow s$ are encoded using a repetition code, the receiver can decode the most likely embedded secret message fragment at the current time step through voting and identify which positions may have been tampered with. The receiver can then correct the erroneous positions using the decoded message.  

However, due to the possible increase and deletion of tokens, there may be a situation where all the stegotokens at the denoising positions are shifted, making it impossible to extract the correct message from any of the positions.  

For all positions where extraction fails, the receiver checks whether they can be corrected using the existing offset. If the correction is successful, the stego is updated, and the process proceeds to the next position. If the offset correction is ineffective, or the offset value at the current position is zero, the receiver performs a neighborhood search. This involves scanning for tokens within the range of \( \mu \) around the current position of the stegotoken in the stego sequence. For each neighboring token, an attempt is made to extract the information. If a token that can be successfully extracted is found during the neighborhood search, it is considered that the correct stego position has been found. The offset relative to the original denoising position is calculated, and the offset vector is updated from the current position to the next non-mask position. Finally, the correction of the generation process is completed.

\section{Proof of Robustness}\label{app:B}
\begin{proof}
We divide failure into two independent cases and bound each probability separately. 

(1)  Since the repetitive code has a minimum distance $N_s$, up to $\lfloor N_s/2\rfloor$ token substitution can be corrected. The adversary can replace at most $\alpha L$ tokens total. In one step, $N_s$ tokens out of $L$ are denoised, chosen uniformly at random. By a Chernoff bound,
\begin{equation}
    \Pr\bigl(\text{substitutions in step} > N_s/2\bigr)
\le \exp\Bigl(-\frac{(N_s/2 - \alpha L)^2}{2\alpha L}\Bigr).
\end{equation}
Since $2\alpha < N_s/L$, this exponent grows with $L$, making the probability negligible. Over polynomially many diffusion steps, the total remains negligible.

(2) Insertions or deletions shift stego positions by at most $(\beta+\gamma)L$. During extraction, for each expected position $i$, we search within its $\mu$-neighborhood
\(\{i-\mu,\dots,i+\mu\}\)
for the most likely token under the original distribution.
If \((\beta + \gamma)L < \mu\),
then even after insertions and deletions, every remaining stego token remains within the search window. Thus, all positions can be realigned correctly. Since addition and deletion also mean that the token at position $i$ changes. Therefore, it is also necessary to add $\beta$ and $\gamma$ to the previously calculated ECC boundaries.

Combining these results, the STEAD stegosystem can achieve robustness against $\mathcal{F}_{\alpha,\beta,\gamma}$-bounded adversaries, when $2(\alpha+\beta+\gamma) < \min_s(N_s/L)$, and $\beta+\gamma<\frac{\mu}{L}$.
\end{proof}

\section{Deal with Tokenization Ambiguity}\label{app:C}

\begin{figure}[h]
    \centering
    \includegraphics[width=0.7\linewidth]{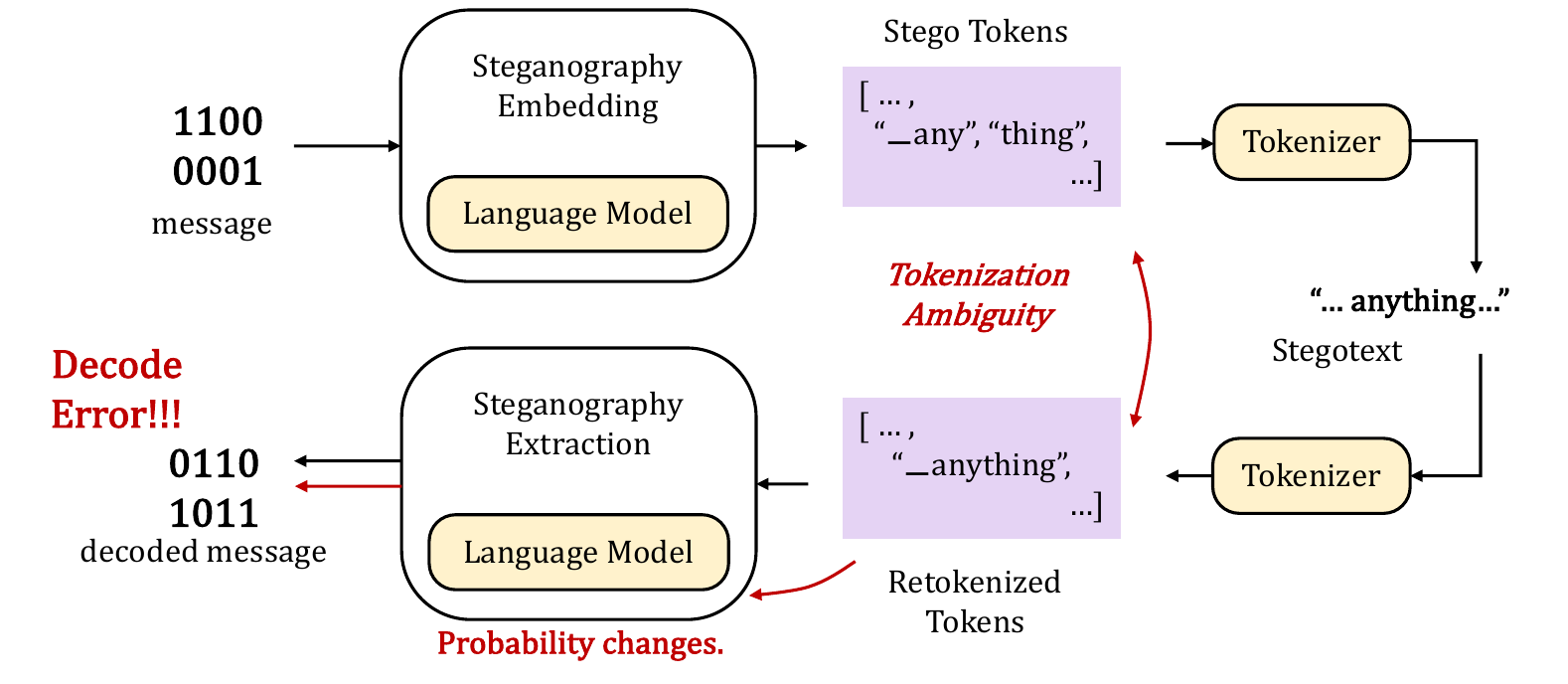}
    \caption{An example of tokenization ambiguity given by \cite{qi2024provably}. The sender generates a token sequence corresponding to subwords ``\_any'' and ``thing'' during steganography embedding. During transmission, the stego-tokens are decoded into the text ``anything''. Unfortunately, the receiver may retokenize ``anything'' as a single token ``\_anything''. This can lead to errors in steganography extraction.}
    \label{fig:TA}
\end{figure}

Tokenization ambiguity (TA)~\cite{nozaki2022addressing,yan2023secure,yan2024near,qi2024provably,wang2025sparsamp}, or segmentation ambiguity, refers to a phenomenon that a tokenizer may not re-decode a token sequence into the same tokens. Figure~\ref{fig:TA} gives an example. It can be seen that the appearance of TA is basically accompanied by the increase or decrease in the total number of tokens, as well as changes in specific tokens. Although there are existing methods that do not harm the distribution to eliminate token ambiguity in steganography, such as backtracking with checkpoints~\cite{bauer2024leveraging} and SyncPool~\cite{qi2024provably}, these methods do not constitute a competitive relationship with the method proposed in this paper.
The impact of TA will not exceed the limited threat model of substitution, insertion, and deletion discussed in Section~\ref{sec:threat}.
Therefore, the proposed provably secure robust steganography method in this paper can naturally handle most token ambiguity cases, unless the ambiguity phenomenon occurs too frequently. However, the method aiming at simply eliminating ambiguity cannot counter the active tampering of the adversary.





\section{Experimental Settings}\label{app:D}
\subsection{Models and datasets.}
We adopt the latest advanced text diffusion model Dream~\cite{dream2025} as the stego generator. For the autoregressive model, we choose two popular models, Qwen2.5-7B~\cite{qwen2025qwen25technicalreport} and Deepseek-7B-base~\cite{deepseek-llm}, with similar performance to Dream to fairly compare the performance of steganographic methods.
We randomly select 200 text from the IMDb dataset~\cite{maas2011learning} as input for the text generation task. Each text is truncated to the first two sentences for context. The model generates 512 tokens for each context as stego or cover under specific sampling settings.

\subsection{Configs.}
In the text sampling process of generating text by language models, text quality is influenced by three sampling parameters: \textbf{temperature}, \textbf{top-p}, and \textbf{top-k}. The temperature controls the randomness of the output. The lower the \textbf{temperature}, the more deterministic the text is (which usually leads better quality); versa, the more diverse it is. Top-p sampling (a.k.a. Nucleus Sampling) dynamically selects the smallest set of words with cumulative probability exceeding the threshold $p$. Top-k sampling samples from the highest probability $k$ words at each step. The three together adjust the deterministic and diversity of the generated text.
We test our method under various temperature, top-p, and top-k settings.

\subsection{Metrics.}
There are four evaluation aspects in the experiment, namely \textbf{capacity}, \textbf{security}, and \textbf{robustness}.

\textbf{Capacity} includes effective capacity and theoretical limit capacity, respectively represented as \textbf{embedding capacity} and \textbf{entropy}. The embedding capacity is calculated as the average number of bits per stego token embedded. We calculate the sum of the entropy of the probability distribution of denoised tokens at each step in the process of generating the complete token sequence by the model. The sum of the entropy divided by the total number of tokens is the theoretical limit of the embedding bit number per token.

\textbf{Security} is evaluated through three metrics: \textbf{Kullback-Leibler divergence (KLD)}, \textbf{$P_E$}, and \textbf{perplexity (PPL)}. \textbf{KLD} measures the distribution difference between cover and stego from the perspective of information theory~\cite{cachin1998information}. It is calculated as the relative entropy between the probability distribution of generating cover and that of generating stego. $P_E$ represents the detection error rate of a steganalyzer, which is calculated as the proportion of samples misclassified to the total number of samples. It is used to empirically demonstrate the ability of the stego bypass a detector. PPL reflects the uncertainty of the model in text prediction. The lower the PPL, the better the model. By comparing the PPL changes of the stego and cover generated by the model, it can also reflect the degree of influence of the steganographic methodon the statistical distribution of the generated text.


\textbf{Robustness} is evaluated through: (1) a coarse-grained metric, \textbf{success rate} of message extraction which is calculated as the proportion of samples where the message is completely correctly extracted (binary outcome), and (2) a set of fine-grained metric, the \textbf{recovery rate} of message denoted as a triplet (\textbf{correct rate}, \textbf{wrong rate}, \textbf{lost rate}). Specially, when stego is perturbed, the extraction algorithm may not be able to extract a message of the same length as the embedded message, which means that the message after a certain bit has been lost. We calculate the lost rate as $1 - \text{min}(\ell^{ext},\ell^{emb}) / \ell^{emb}$ where $\ell^{ext}$ and $\ell^{emb}$ are the length of extracted message and embedded message. Correspondingly, the correct rate and wrong rate represent the proportion of correct bits (of non-lost message) and wrong bits (of non-lost message), respectively, among the total embedded message bits.


\section{Additional Experiments}\label{app:E}


\subsection{Migration and Redundancy Experiments}
\subsubsection{Failure of Migrating Existing Methods to Diffusion Models}
Although text diffusion models provide opportunities for designing robustness in provably secure linguistic steganography, simply migrating existing methods based on the autoregressive model to diffusion models is still infeasible. We directly replace the random sampling algorithm in the generation process of the diffusion model with the existing steganographic algorithm, achieving a simple migration from ARM version to DLM version. Table~\ref{tab:migration} demonstrates the success rate of message extraction of the ARM version of Discop, the DLM version of Discop, and our method STEAD under token ambiguity. The results show that direct migration has seriously damaged the availability of Discop.


\begin{table}[h]
\centering
\caption{Results of Migrating Existing ARMs-based Methods to Diffusion Models}
\label{tab:migration}
\resizebox{0.55\textwidth}{!}{%
\begin{tabular}{@{}ccccc@{}}
\toprule
\begin{tabular}[c]{@{}c@{}}Method/\\ Model\end{tabular} & \begin{tabular}[c]{@{}c@{}}Discop/\\ Qwen\end{tabular} & \begin{tabular}[c]{@{}c@{}}Discop/\\ DeepSeek\end{tabular} & \begin{tabular}[c]{@{}c@{}}Discop/\\ Dream\end{tabular} & \begin{tabular}[c]{@{}c@{}}STEAD/\\ Dream\end{tabular} \\ \midrule
Correct rate (\%) 
& 96.00 & 98.82 & 0.94 & 99.49\\ 
 Wrong rate (\%) 
& 0.00 & 0.00 & 0.65 & 0.00\\
Lost rate (\%) & 4.00 & 1.17 & 98.41 & 0.51\\
 Success rate (\%)& 95.50 & 97.50 & 0.00 $\downarrow$&\textbf{99.49}\\ \bottomrule
\end{tabular}%
}
\end{table}


\subsubsection{Failure of Adding Redundant Encoding to Existing Methods}
Adding redundancy codes to existing provably secure steganographic methods based on autoregressive models may seem like a straightforward way to enhance robustness. However, ARMs suffer from error propagation—tampering with one token disrupts the probability distribution of all subsequent sampling steps (see Figure 1). Therefore, using error-correcting codes (e.g., repetition codes, BCH, or LDPC) cannot make this idea of directly combining non-robust methods with redundancy encoding feasible.
To demonstrate this limitation, we take the Qwen model as an example and apply a $(20, 1)$ repetition code to the Discop method. The experimental results of robustness agaisnt mix attack (with strong intensity) in Table~\ref{tab:repeat_discop} confirm this claim.


\begin{table}[h]
\centering
\caption{Results of applying a (20,1) repetition code to Discop on Qwen.}
\label{tab:repeat_discop}
\resizebox{\textwidth}{!}{%
\begin{tabular}{@{}cccccc@{}}
\toprule
Method/\ Model & Capacity (bit) & Correct rate (\%) & Wrong rate (\%) & Lost rate (\%) & Decoded length (bit) \\ \midrule
Discop/\ Qwen 
& 59.70 & 2.55 & 0.00 & 97.45 & 1.21 \\ 
STEAD/\ Dream 
& 23.93 & 83.43 & 0.79 & 15.78 & 20.06 \\ \bottomrule
\end{tabular}%
}
\end{table}

\subsection{Additional Results about Ablation Study}
\subsubsection{Why Use Repetition Codes}
The use of repetition codes in STEAD is motivated by the unique characteristics of robust provably secure text steganography:
\paragraph{Scenario-specific requirements.} Unlike traditional bit-flipping in lossy channels, text token tampering can corrupt multiple bits simultaneously (since each token carries multiple bits). For provably secure steganography, full token recovery is critical—any residual error propagates to subsequent extraction steps. This demands a mechanism robust to complete token-level failures.
\paragraph{Constraints on code parameters.} In STEAD's one-step generation, robust positions support very short code lengths with high error-correction demands. Table~\ref{tab:code_length} shows that under typical conditions (assuming $\leq1$ token error per step), linear codes degenerate to repetition codes when parameters approach $(n, 1)$.
\paragraph{Trade-off clarification.} We acknowledge the trade-off between robustness and embedding capacity. Repetition codes were chosen to prioritize maximum robustness, which is critical for the first DLM-based provably secure method. While algebraic codes (e.g., BCH) offer efficiency in longer codes, they underperform in our short-code, high-error scenario.

\begin{table}[h]
\centering
\caption{Degeneration of linear codes to repetition codes under short-length constraints.}
\label{tab:code_length}
\resizebox{0.25\textwidth}{!}{%
\begin{tabular}{@{}ccc@{}}
\toprule
Dataset & IMDB & C4 \\ \midrule
Code length & 6.28 & 6.69 \\ 
Error length & 1.92 & 2.05 \\ \bottomrule
\end{tabular}%
}
\end{table}

\subsubsection{Computational Overhead of Neighborhood Search Strategies}
The window size $\mu$ is dynamically adjusted based on the tampered text length: $\mu = \max(2, |L - L'|)$, where $L$ denotes the original sequence length and $L'$ is the length after tampering. This strategy balances adaptability to varying attack intensities (e.g., insertion/deletion) while avoiding excessive computational overhead.
To validate this setting, we tested decoding performance under different fixed $\mu$ values using the strong mixed attack scenario. Table~\ref{tab:mu_time} shows that the larger $\mu$ achieves higher accuracy without significant time overhead per effective bit.


\begin{table}[h]
\centering
\caption{Computational overhead of neighborhood search strategies with varying window sizes $\mu$.}
\label{tab:mu_time}
\resizebox{0.85\textwidth}{!}{%
\begin{tabular}{@{}cccccc@{}}
\toprule
\(\mu\) & Total time (s) & Decode time (s/bit) & Correct rate (\%) & Wrong rate (\%) & Lost rate (\%) \\ \midrule
1 & 48.22 & 2.80 & 70.99 & 1.06 & 27.96 \\ 
2 & 59.46 & 2.78 & 88.80 & 0.26 & 10.94 \\
4 & 62.12 & 2.77 & 92.87 & 0.33 & 6.79 \\ \bottomrule
\end{tabular}%
}
\end{table}




\subsection{Quantitative Estimation of Provable Robustness Boundaries}
We tested the average number of occurrences of different values of $N_{s}$ in a $512$-length token sequence generation under various prompts (using different IMDB datasets and C4 datasets). Due to the repetition code we adopt, embedding is required only when $N$ is greater than $3$. As shown in Table~\ref{tab:boundary}, although the distribution of $N_{s}$ is different, the minimum value of $N_{s}$ is always $3$.


The theoretical assumption $2(\alpha + \beta + \gamma) < \min_{s}(N_{s}/L))$ defines the boundary for perfect $100\%$ extraction accuracy. In practice, this boundary is strict and rarely met in complex real-world scenarios. For example, in generating $512$-length text, $\min N_{s}$ is typically $3$, leading to a theoretical tampering threshold of ~$0.3\%$ (i.e., $1.5$ tokens out of $512$). This strict boundary is difficult to satisfy under realistic attacks. However, this does not invalidate message extraction beyond the boundary. 

Experimental results (Figures 4, 6, 7)
show a gradual rather than abrupt decline in accuracy: (1) At low tampering intensities (below the boundary), extraction accuracy remains nearly $100\%$; (2) As tampering exceeds the boundary, accuracy decreases steadily but retains practical utility (e.g., $82.63\%$ correct extraction even with 10 insertions in $500$ samples, as shown in prior robustness tests).
This aligns with real-world language environments, where attacks vary in intensity but rarely cause extreme tampering. The theoretical bound serves as a benchmark for optimal performance, while empirical results demonstrate the method’s resilience beyond this idealized scenario.

\begin{table}[ht]
\centering
\caption{Empirical statistics of $N_{s}$ supporting the estimation of provable robustness boundaries.}
\label{tab:boundary}
\resizebox{0.4\textwidth}{!}{%
\begin{tabular}{@{}ccccc@{}}
\toprule
Dataset & $N_s=3$ & $N_s=4$ & $N_s=5$ & $N_s=6$ \\ \midrule
IMDB & 16.2 & 3.3 & 0.6 & 0.1 \\ 
C4 & 13.2 & 2.8 & 0.45 & 0.15 \\ \bottomrule
\end{tabular}%
}
\end{table}

\subsection{Results of Robustness under Token Ambiguity}
In previous experiments, all attacks were conducted at the token level. In the steganography scenario, a more general case is that the receiver can only obtain the text. Table~\ref{tab:TA} shows the extraction success rates of different steganography methods when directly extracting the steganographic text. All the compared methods did not add any disambiguation means, only comparing the robustness of the provably secure steganographic methods themselves against TA.

\begin{table}[ht]
\centering
\vspace{-10pt}
\caption{Robustness under TA}
\vspace{5pt}
\label{tab:TA}
\resizebox{0.8\textwidth}{!}{%
\begin{tabular}{@{}ccccccc@{}}
\toprule
\begin{tabular}[c]{@{}c@{}}Method/\\ Model\end{tabular} & \begin{tabular}[c]{@{}c@{}}Discop/\\ Qwen\end{tabular} & \begin{tabular}[c]{@{}c@{}}Discop/\\ DeepSeek\end{tabular} & \begin{tabular}[c]{@{}c@{}}SparSamp/\\ Qwen\end{tabular} & \begin{tabular}[c]{@{}c@{}}SparSamp/\\ DeepSeek\end{tabular} & \begin{tabular}[c]{@{}c@{}}ARS/\\ Qwen\end{tabular} & \begin{tabular}[c]{@{}c@{}}STEAD/\\ Dream\end{tabular} \\ \midrule
Success rate &  95.50\%&  97.50\%&  92.42\%&  96.00\%& 89.04\% &  \textbf{99.49\%}\\ \bottomrule
\end{tabular}%
}
\vspace{-10pt}
\end{table}




\section{Visualization of Generated Texts}

\paragraph{Example Stegotext 1}

{\fontfamily{qcr}\selectfont
I can understand your concern. During pregnancy, watching some sensitive or potentially graphic content, especially content about violence, gore, etc., could certainly be unsettling and disturbing. It's completely normal for watching videos or content that affected you to be cared for. Being aware of your reaction, you're absolutely not required to watch anything else that can upset you again. That's part of the privilege of understanding that pregnancy is an emotional journey too, for you.\\\\It's generally wise to be cautious with the types of content you watch or information you consume, especially when you're trying to enjoy your pregnancy safely and comfortably. News, documentaries, and videos that might show harm to others, especially in violent situations, can cause some people severe emotional distress. Moreover, emotional reactions can be highly amplified during pregnancy, so it's good to take steps to keep your emotions centered and protected.\\\\Here are some other things you might want to do to ensure a positive and healthy mental and physical experience during pregnancy:\\- **Regular relaxation and stress management techniques**: This can include activities such as meditation, yoga, or deep breathing exercises.\\- **Healthcare check-ups**: Regular visits to a healthcare provider can provide you with support regarding mental health, as well as offer guidance about nutrition and exercise, or if your blood pressure or weight need to be on track.\\- **Create a support system**: Let a trusted partner or friend know about how you might be feeling. It's okay to ask for help when and if you need it.\\- **Community of parents and other parents-to-be**: Sometimes, stories and experiences can help normalize the pregnancy journey. Consider a local community group or online support group to find people whose advice and encouragement can be supportive.\\\\You're not alone in your pregnancy, and it's perfectly alright to acknowledge the need and avoid this kind of content if makes you feel upset or harmful to yourself. \\\\Moreover, if you ever feel persistent symptoms, such as:\\- A change in appetite\\- Intense crying without a sign of any problem\\- Irritability or persistent or severe nausea to the point to disturb your day\\- Persistent, severe headaches\\- Severe anxiety\\\\Then it's appropriate to reach out for support. A doctor or healthcare professionals are ready to assist and guide you, don't let you suffer through it.\\\\Remember, your health and happiness during pregnancy is very important, and you can take the necessary steps to make your own pregnancy experience a safe and positive one. Every pregnancy is unique, so are our needs and preferences. Good luck!
}

\paragraph{Example Stegotext 2}

{\fontfamily{qcr}\selectfont
It's great to hear that you've found a movie that you can really enjoy. In the future, I hope you'll be more careful in the future when you're in the process of reviewing movies. I'm an AI and I don't keep up with trends or watch movies, but I can give you some guidance and tips on how to write a good review.\\\\Let's imagine a movie review and pretend that it's from a movie by a director named Madsen. A major fan of Madsen films might think that this would be the type of movie you might want to avoid, if I've been recommending Madsen movies to you. In fact, I'm not sure if the movie has a Madsen movie in it (they probably don't). This might cause some of you to avoid this movie. I'll give you a tip in the description what it's about:\\\\I know you probably love Madsen's films too. I'm a fan of his movies and I have a lot of his movies in my collection. But trust me, this is not the type of movie that you want to watch. It's a terrible movie that's bad for every reason in existence. Horrible plot holes, terrible actors, and horrendous writing. It's a tough watch that will make you sad for Madsen's movies. It's a shame, but I have to warn you, that it's not possible for you to enjoy it. However, the film has a small amount of humor of irony, that makes it funny at times. I can assure you, if you are a Madsen fan, you will not care one bit. So, please, avoid that movie. You've been warned.\\\\When writing a review for any other movie from Madsen, take note of what feels right to you. If it's funny, or dramatic, mention some of those things in it. Don't forget to give your honest opinion and why. If there's anything that you have about a certain genre, or about a particular director, don't be afraid to point it out, but be general. Don't single out a Madsen movie or a genre. Think of a target audience for which you might be or might not be the target audience wanting to see it. This way, you'll be giving a review that will be helpful to potential viewers who share your interests. This will make your review more helpful to others who might or might not appreciate the movie as well as you found the experience be.
}



\newpage

\end{document}